\documentclass[
  aps,
  physrev,
  reprint,
  superscriptaddress,
  nofootinbib,
  longbibliography,
  floatfix
]{revtex4-2}

\usepackage[T1]{fontenc}
\usepackage[utf8]{inputenc}
\usepackage{microtype}
\usepackage{amsmath,amssymb,amsfonts,bm,mathtools,amsthm}
\usepackage{braket}
\usepackage{graphicx}
\graphicspath{{figures/}}
\usepackage{xcolor}
\usepackage{orcidlink}
\usepackage{booktabs}
\usepackage{silence}
\usepackage{hyperref}

\hypersetup{
  colorlinks=true,
  linkcolor=blue!50!black,
  citecolor=blue!50!black,
  urlcolor=blue!50!black,
  pdftitle={Collective thermalization, work reliability, and resource bounds in a population-inverted Dicke Otto engine},
  pdfauthor={Cler T. Garcez, Gabriella G. Damas, Norton G. de Almeida, and G. D. de Moraes Neto},
  pdfsubject={Collective quantum Otto engines, population inversion, resource accounting, and work fluctuations}
}

\newtheorem{theorem}{Theorem}[section]
\newtheorem{proposition}[theorem]{Proposition}

\theoremstyle{definition}

\newcommand{\avg}[1]{\left\langle #1\right\rangle}

\newcommand{\Tr}{\operatorname{Tr}}
\newcommand{\diag}{\operatorname{diag}}

\newcommand{\ind}{\mathrm{ind}}

\newcommand{\gross}{\mathrm{gross}}

\newcommand{\dd}{\mathrm d}

\newcommand{\id}{\mathbb I}

\begin{document}
\raggedbottom

\title{Collective thermalization, work reliability, and resource bounds in a population-inverted Dicke Otto engine}

\author{Cler T. Garcez~\orcidlink{0009-0003-0999-3379}}
%\email{clergarcez@discente.ufg.br}
\affiliation{Instituto de F\'isica, Universidade Federal de Goi\'as, 74.001-970, Goi\^ania--GO, Brazil}

\author{Gabriella G. Damas~\orcidlink{0000-0003-3376-9281}}
\affiliation{Instituto de F\'isica, Universidade Federal de Goi\'as, 74.001-970, Goi\^ania--GO, Brazil}
\affiliation{Department of Physics, Zhejiang Normal University, Jinhua 321004, China}

\author{Norton G. de Almeida~\orcidlink{0000-0001-8517-6774}}
\email{norton@ufg.br}
\affiliation{Instituto de F\'isica, Universidade Federal de Goi\'as, 74.001-970, Goi\^ania--GO, Brazil}

\author{G. D. de Moraes Neto~\orcidlink{0000-0003-4273-8380}}
\email{gdmneto@gmail.com}
\affiliation{Department of Fundamental Sciences, Hainan Bielefeld University of Applied Sciences, Danzhou, Hainan 578101, China}

\date{September 2026}

\begin{abstract}
Population inversion and collective relaxation can both enhance the performance of a quantum heat
engine, but they do so for fundamentally different reasons. We disentangle these effects in a
quantum Otto engine whose working medium is a symmetric collective spin of $N$ two-level
constituents. For commuting work strokes and collective reservoir coupling, the dynamics reduces
exactly to a finite birth--death process on the Dicke ladder, providing a unified description of
stationary operation, work fluctuations, and finite-time relaxation. In the complete-reset limit,
where each reservoir contact fully relaxes the working medium, the passive work per cycle saturates
as the system grows, whereas population inversion makes the extracted work increase linearly with
the number of constituents. The work reliability exhibits the same linear improvement, exceeding
the square-root scaling obtained from $N$ independent engines. The origin of this enhancement is
geometric: passive hot and cold states remain localized near the same end of the Dicke ladder,
whereas inversion places them near opposite ends, producing a macroscopic polarization
displacement. Collective coupling plays a different role by accelerating the finite-time dynamics
through the Dicke transition rates. For matched passive and inverted hot states, the additional
gross work is exactly the Otto efficiency times the ergotropy of the inverted state. Once the
corresponding reversible excess state-formation cost is charged, the incremental advantage over the
passive engine becomes negative, showing that the enhanced gross output converts a pre-existing
active-state resource rather than creating a free thermodynamic gain. For finite contacts, exact
trajectory statistics and tilted cycle maps further reveal intra-cycle and inter-cycle
correlations and quantify how collective kinetics, fluctuations, and resource accounting remain
linked away from complete reset.
\end{abstract}

\maketitle

\section{Introduction}
\label{sec:intro}

Quantum Otto cycles provide a transparent setting in which changes of the working-medium Hamiltonian are separated from energy exchange with reservoirs, allowing work and heat to be identified directly at the level of a microscopic quantum system~\cite{kieu2004second,quan2007quantum,rezek2006irreversible,alicki2015nonequilibrium}. At finite cycle times, however, the mean extracted work is only part of the thermodynamic description. Incomplete contacts make energy exchange stochastic, correlate different cycle corners, and can preserve memory from one cycle to the next~\cite{talkner2007fluctuation,esposito2009nonequilibrium,campisi2011colloquium,manikandan2019efficiency,denzler2020efficiency,denzler2024nonequilibrium,saryal2021bounds}. These fluctuations are experimentally accessible: full-counting statistics of quantized heat exchange and output fluctuations have been resolved in a quasi-spin Otto engine~\cite{bouton2021atomic}. Finite-time performance must therefore be characterized not only by the mean work output, but also by its reliability and by the correlations that accumulate during repeated operation.

Population inversion and collective coupling can both enhance engine performance, but through different mechanisms. Population inversion changes the thermodynamic state supplied to the working medium. For a bounded spectrum it gives access to effective negative temperatures and can increase the gross output of an Otto cycle, as demonstrated in spin and cold-atom experiments and explored in finite-time and many-body settings~\cite{deassis2019efficiency,nettersheim2022power,deSousa2024,DamasInverted2026,brollo2025negative}. Collective coupling instead changes the dynamics: when many constituents exchange energy through a common environmental channel, their transition amplitudes combine coherently, modifying relaxation rates and producing power and reliability behavior that can differ from that of independent constituents~\cite{kloc2019collective,watanabe2020collective,souza2022collective,jaseem2023quadratic}. This mechanism originates from Dicke cooperative radiation~\cite{dicke1954coherence,gross1982superradiance} and underlies several superradiant and Dicke-inspired heat-engine proposals~\cite{hardal2015superradiant,niedenzukurizki2018collective,xu2024universal}.

A larger gross output from an inverted state does not, by itself, establish a thermodynamic advantage. Population-inverted states are active: part of their energy is already available as unitary-extractable work. Passivity and ergotropy quantify this state-level resource~\cite{pusz1978passive,allahverdyan2004maximal,allahverdyan2005minimal}, while nonequilibrium free-energy differences provide reversible formation bounds once the reference state and allowed operations are specified~\cite{brandao2013resource,horodecki2013fundamental}. For nonthermal reservoirs, the heat delivered to the working medium and the work required to prepare, maintain, or restore the reservoir are therefore distinct thermodynamic quantities~\cite{niedenzu2018bound,aguilar2022linear}. The central question is not simply whether inversion increases the extracted work, but how much of that increase reflects the active resource already stored in the inverted state and how much arises from collective finite-time dynamics.

We address this question with a symmetric collective spin formed by $N$ identical two-level
constituents. The compression and expansion strokes change the energy spacing while preserving the
populations of the Dicke states, whereas the hot and cold contacts drive collective transitions
between neighboring rungs of the ladder. Starting from a state diagonal in the Dicke basis, the
entire cycle remains diagonal and the quantum master equation reduces exactly to a finite
birth--death process on $N+1$ levels. The reservoirs determine the bias between upward and downward
transitions, while the Dicke matrix elements determine how rapidly population moves between
neighboring rungs. Passive and inverted operation can therefore be compared within the same
collective dynamics, with the work protocol unchanged.

This exact reduction places the engine in a standard statistical-mechanics setting. The stationary
cycle is the fixed point of a finite Markov process, finite-time relaxation is generated by
stochastic propagators, work and heat fluctuations follow from positive trajectory probabilities,
and repeated-cycle correlations can be obtained from tilted maps. Mean thermodynamics,
fluctuations, collective relaxation, and state-level resource accounting can therefore be treated
within the same finite-state description. For finite system size, positive contact durations, and
strictly positive transition rates, the population dynamics has a unique stationary cycle. In the
engine regime, the efficiency retains the standard Otto value fixed by the ratio of the two energy
gaps, while collective relaxation determines how much population is transferred during a finite
contact and therefore controls the work and power.

The complete-reset limit, in which each contact fully relaxes the working medium to its stationary
state, makes the thermodynamic effect of inversion especially transparent. For fixed reservoir
conditions away from infinite temperature, the cold and passive-hot populations remain localized
near the same end of the Dicke ladder as the system grows. Their mean positions remain separated by
only a finite number of rungs, and the passive work per cycle therefore saturates. An inverted hot
state is localized near the opposite end of the ladder. The separation between the cold and hot
populations then grows in proportion to the number of constituents, and so does the extracted
work. In asymptotic terms, the two branches display $O(1)$ and $O(N)$ complete-reset work,
respectively.

The same finite-ladder geometry controls the work fluctuations. In the complete-reset limit, the
Dicke partition function generates the work cumulants in closed form, following the finite-ladder
framework for homothetic Otto spectra developed in Ref.~\cite{DamasHomothetic2026}. For matched
passive and inverted hot states, the local widths of the endpoint distributions are the same even
though their mean positions can be separated across almost the entire ladder. The work reliability,
defined by the mean output relative to its fluctuations, therefore remains bounded on the passive
branch but grows linearly with the number of constituents on the inverted branch. By comparison,
combining $N$ statistically independent engines produces only the familiar square-root improvement
associated with statistical self-averaging. The enhanced reliability of the inverted collective
engine thus comes from increasing the signal without a corresponding growth of the endpoint
fluctuations.

This extensive complete-reset work is an endpoint effect, whereas collectivity controls the
finite-time kinetics. The Dicke matrix elements reshape the transition rates and can accelerate
relaxation during finite contacts. Near one end of the ladder, this dynamics admits a controlled
Holstein--Primakoff description. An inverted cycle is more demanding because each hot contact must
move population from the vicinity of one end of the ladder toward the other, over a distance that
itself grows with system size. Its finite-time behavior therefore cannot be inferred from a single
pole-local approximation and must be obtained from the full finite-state dynamics.

The resource analysis provides a complementary interpretation of the enhanced inverted output. For
matched passive and inverted hot states, the additional gross work is exactly the Otto efficiency
times the ergotropy carried by the inverted state. If the corresponding reversible excess
state-formation cost is charged once per cycle, the incremental balance relative to the matched
passive engine becomes negative. The enhanced gross output therefore represents the conversion of a
pre-existing active-state resource rather than a free energetic gain. The same conclusion holds for the passive comparators considered here, provided each operates as an engine and uses the same reversible state-level accounting with a complete reset to the cold reference.

Finite contacts add another layer of statistical behavior. An incomplete contact retains
information about the population entering the stroke, producing correlations between the states at
different corners of the same cycle. The final state of one cycle can also influence the next, so
a stationary mean output does not imply statistically independent operation. We evaluate these
finite-contact statistics directly from the finite-dimensional propagators and positive trajectory
probabilities, and use tilted cycle maps to distinguish one-cycle fluctuations from the long-time
fluctuations accumulated over repeated operation. Holstein--Primakoff and Kramers--Moyal
descriptions are used only as controlled large-system diagnostics in their respective domains,
while symmetry-leakage and secular-control criteria quantify the range over which the ideal
collective description remains reliable.

The remainder of the paper is organized as follows. Section~\ref{sec:model} introduces the
collective working medium, reservoir dynamics, and four-stroke protocol.
Section~\ref{sec:means} derives the stationary population cycle and its mean thermodynamics.
Section~\ref{sec:fluctuations} develops the complete-reset work statistics and reliability, and
Sec.~\ref{sec:cost} analyzes the thermodynamic resource carried by the inverted state.
Finite-contact trajectory statistics, repeated-cycle correlations, and the approximation hierarchy
are presented in Sec.~\ref{sec:finite-fluct}, followed by the conclusions and outlook in
Sec.~\ref{sec:conclusion}.

\section{Model and Four-Stroke Protocol}
\label{sec:model}

We now specify the symmetric collective spin, commuting gap modulation, and collective dissipative contacts used throughout the cycle. Unless stated otherwise, all dynamical and thermodynamic quantities considered below (including Gibbs states, passivity, and ergotropy) are defined within the $(N+1)$-dimensional, permutation-symmetric $j=N/2$ working-medium Hilbert space. Benchmarks involving $N$ independent constituents and product-state references in the full $2^N$-dimensional Hilbert space are introduced in Secs.~IV.D and V.B, while symmetry-breaking leakage out of the symmetric sector is considered separately in Sec.~VI.F.

%============================================================
\subsection{Notation, energy levels, and temperature}
\label{sec:model-notation}

We use units with $\hbar=k_{\rm B}=1$. The symbol $N$ denotes the number of identical two-level
constituents, $j=N/2$ is the maximum collective-spin quantum number, and $m$ is the eigenvalue of
$J_z$ in the range $-j\leq m\leq j$. The control parameters $\Omega_c$ and $\Omega_h$ are positive
energy gaps, with $0<\Omega_c<\Omega_h$; throughout,
$\Delta\Omega\equiv\Omega_h-\Omega_c>0$.

Each reservoir is characterized by an inverse-temperature parameter $\beta_\alpha=1/T_\alpha$,
where $\alpha=c,h$. For the finite-dimensional working medium, $\beta_\alpha$ may be positive,
zero, or negative, and we use $x_\alpha\equiv\beta_\alpha\Omega_\alpha$. For a fixed positive
gap, $x_\alpha>0$ gives a passive Gibbs state, $x_\alpha=0$ gives the uniform infinite-temperature
state, and $x_\alpha<0$ gives a population-inverted Gibbs state. The sign of $x$ is therefore a
statement about the ordering of populations, not about a negative energy gap. For a bounded spectrum,
the entropy can decrease after reaching its maximum as the energy is increased further, so the
thermodynamic relation $\beta=\partial S/\partial U$ permits $\beta<0$. Such states are hotter
than any positive-temperature Gibbs state in the standard thermodynamic ordering, rather than colder
than absolute zero~\cite{purcellpound1951,ramsey1956thermodynamics}.

For the Hamiltonian used below, the energy of $|j,m\rangle$ is $E_\alpha(m)=-\Omega_\alpha m$.
Thus $m=j$ is the lowest-energy level and $m=-j$ is the highest-energy level. A passive state
places more probability near $m=j$, whereas an inverted state places more probability near
$m=-j$. Equivalently, the integer $k\equiv j-m=0,1,\ldots,N$ counts the distance from the
low-energy pole of the symmetric ladder. With the present Hamiltonian convention it can be viewed
as the number of spin excitations above that pole. Thus a passive state occupies only the first few
rungs ($k=O(1)$), whereas an inverted state accumulates population near the opposite end
($k\simeq N$). This energy ordering is useful when interpreting the signs of work and heat.

%============================================================
\subsection{Collective operators and the symmetric Dicke sector}
\label{sec:model-dicke}

The working medium is a collection of $N$ identical two-level constituents, each described by
Pauli operators $\sigma_\nu^{(r)}$ ($r=1,\ldots,N$). We define
\begin{equation}
    J_\nu\equiv\frac12\sum_{r=1}^N\sigma_\nu^{(r)}, \qquad
    J_\pm\equiv J_x\pm iJ_y=\sum_{r=1}^N\sigma_\pm^{(r)},
    \label{eq:Jnu}
\end{equation}
which satisfy $[J_z,J_\pm]=\pm J_\pm$ and $[J_+,J_-]=2J_z$. The maximum-spin sector has $j=N/2$ and $m=-j,-j+1,\ldots,j$. It is spanned by the $N+1$
permutation-symmetric Dicke states $|j,m\rangle$, and all dynamics
below are restricted to this subspace.

Within this sector, the ladder operators act as
\begin{equation}
    J_\pm|j,m\rangle=\sqrt{(j\mp m)(j\pm m+1)}\,|j,m\pm1\rangle.
    \label{eq:ladder}
\end{equation}

The squared matrix elements in Eq.~\eqref{eq:ladder} are the source of the collective dependence
of the transition rates. They are largest near the center of the ladder and vanish at its ends.
The physical distinction from independent relaxation is that a common reservoir couples to the
coherent sum $J_\pm=\sum_r\sigma_\pm^{(r)}$. The constituent amplitudes therefore combine
before the transition probability is formed, producing the Dicke factors
$(j\mp m)(j\pm m+1)$. With independent local reservoirs, the individual transition
probabilities would instead add without this collective ladder enhancement.

The sector is preserved when the Hamiltonian and all jump operators are built from $J_z$ and
$J_\pm$. Since $J^2$ commutes with each of these operators, the projector onto fixed $j$ commutes
with the evolution. We take the working medium to be prepared in the $j=N/2$ sector; the effect of
symmetry-breaking local channels is quantified in Sec.~\ref{sec:validity}.

%============================================================
\subsection{Hamiltonians and the four strokes}
\label{sec:model-otto}

The control Hamiltonian during an isochore at branch $\alpha$ is
\begin{equation}
    H_\alpha=-\Omega_\alpha J_z, \qquad \alpha\in\{c,h\}.
    \label{eq:Halpha}
\end{equation}

Since $H_h$ is proportional to $H_c$, the two work Hamiltonians commute. The work strokes can
therefore be implemented as a change of the gap without changing the Dicke populations. The cycle
corners are labeled as follows:

\begin{enumerate}
    \item corner 1: before compression, with population vector $\vec p_1$ and Hamiltonian $H_c$;
    \item corner 2: after compression, with the same populations and Hamiltonian $H_h$;
    \item corner 3: after the hot contact, with population vector $\vec p_3$ and Hamiltonian
    $H_h$;
    \item corner 4: after expansion, with the same populations and Hamiltonian $H_c$;
    \item corner 5: after the cold contact, with population vector $\vec p_1'$.
\end{enumerate}

The stationary cycle satisfies $\vec p_1'=\vec p_1$. The hot and cold contacts are the only
strokes that change the population vector. Figure~\ref{fig01} shows this sequence.

\begin{figure}[tbp]
    \centering
    \includegraphics[width=0.9\linewidth]{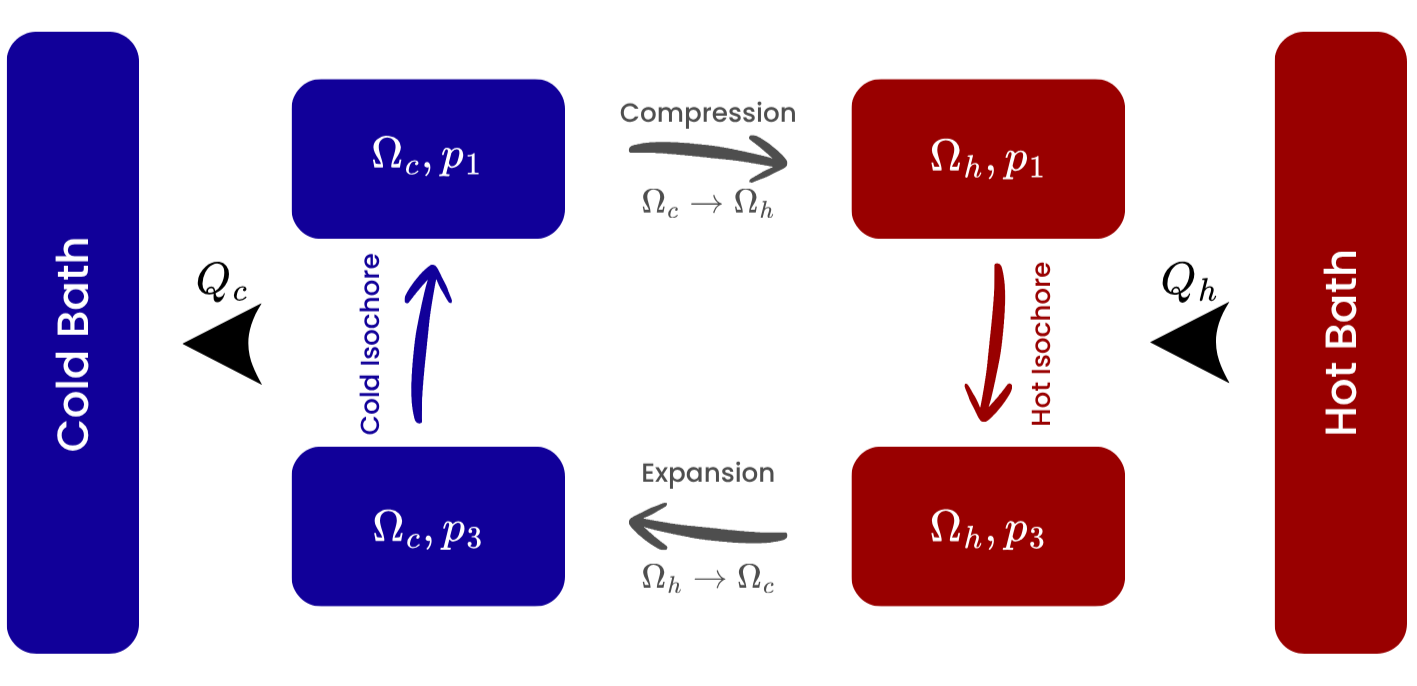}
    \caption{Four-stroke collective Otto cycle. The compression and expansion change the gap while
    preserving the Dicke populations. The hot and cold contacts act at fixed gaps and move the
    populations toward their respective stationary distributions. The stationary-cycle condition
    identifies corner 5 with corner 1.}
    \label{fig01}
\end{figure}

Operationally, the protocol requires three experimental capabilities: tuning the level spacing
between $\Omega_c$ and $\Omega_h$, switching the engineered hot and cold contacts, and performing
cycle-resolved readout of the collective polarization at selected corners. Because the work strokes
preserve $m$, the recorded corner labels directly determine the corresponding energy changes;
repeated realizations then reconstruct the joint work and heat statistics considered below.

We define work and heat as energy changes of the working medium: positive work means energy is
provided to the working medium, and positive heat means energy enters it from a reservoir. Over one
stationary cycle,
\begin{equation}
    \langle W\rangle+\langle Q_h\rangle+\langle Q_c\rangle=0.
    \label{eq:first-law-cycle}
\end{equation}

An engine has $\langle W\rangle<0$, $\langle Q_h\rangle>0$, and $\langle Q_c\rangle<0$. We report
the extracted gross power and efficiency as
\begin{equation}
    P_{\gross}\equiv\frac{-\langle W\rangle}{\tau_{\rm cyc}},\qquad
    \eta\equiv\frac{-\langle W\rangle}{\langle Q_h\rangle},
    \label{eq:P-eta-gross}
\end{equation}
when the cycle operates in the engine regime. Here
$\tau_{\rm cyc}=\tau_h+\tau_c+\tau_{u,1}+\tau_{u,2}$ is the duration of one cycle. The numerical
results set the two unitary-stroke durations to zero; adding finite durations changes the power
denominator but not the endpoint energy changes in this commuting model.

%============================================================
\subsection{Collective reservoir dynamics}
\label{sec:model-reservoir}

Collective coupling means that all constituents exchange energy with the same bath operator,
\begin{equation}
    H_{SB}=(J_++J_-)\otimes B_\alpha,
    \label{eq:HSB}
\end{equation}
rather than through independent local operators
$\sum_r(\sigma_+^{(r)}+\sigma_-^{(r)})\otimes B_\alpha^{(r)}$. The independent-copy benchmark is built from the corresponding local constituent dynamics.

"Davies-form" refers to the frequency-resolved Lindblad structure of the reduced generator. The collective Davies-form generator is
\cite{davies1974markovian,breuer2002theory}
\begin{equation}
    \dot\rho=\mathcal L_\alpha\rho=-i[H_\alpha,\rho]
    +\Gamma_{\downarrow,\alpha}\mathcal D[J_+]\rho
    +\Gamma_{\uparrow,\alpha}\mathcal D[J_-]\rho,
    \label{eq:Lalpha}
\end{equation}
where $\mathcal D[L]\rho\equiv L\rho L^\dagger-\tfrac12\{L^\dagger L,\rho\}$.
$J_+$ lowers the energy because it increases $m$, while $J_-$ raises the energy. $\Gamma_{\downarrow,\alpha}$ is the elementary downward rate and $\Gamma_{\uparrow,\alpha}$ is the elementary upward rate.

For a positive-temperature thermal Davies generator, the Kubo--Martin--Schwinger (KMS) detailed-balance condition gives the rate ratio
~\cite{davies1974markovian,esposito2009nonequilibrium}
\begin{equation}
    \frac{\Gamma_{\uparrow,\alpha}}{\Gamma_{\downarrow,\alpha}}
    =e^{-\beta_\alpha\Omega_\alpha}=e^{-x_\alpha}.
    \label{eq:kms}
\end{equation}

For the engineered inverted branch, the same detailed-balance ratio parametrizes an active stationary channel: $x_\alpha<0$ makes the upward elementary rate larger than the downward one. The inverted contact is an engineered active channel, not an ordinary positive-temperature equilibrium reservoir. The preparation resource associated with that active channel is treated in Sec.~\ref{sec:cost}.

To compare passive and inverted contacts at the same elementary kinetic scale, the matched-total-rate convention is imposed:
\begin{equation}
\Gamma_{\downarrow,\alpha}+\Gamma_{\uparrow,\alpha}=\gamma, \qquad
\begin{aligned}
\Gamma_{\downarrow,\alpha}&=\frac{\gamma}{1+e^{-x_\alpha}}, &
\\\Gamma_{\uparrow,\alpha}&=\frac{\gamma e^{-x_\alpha}}{1+e^{-x_\alpha}}.
\end{aligned}
\label{eq:matched-rate}
\end{equation}
\begin{table}[b]
\centering
\caption{Rate and state-matching conventions used in the finite-contact comparisons.}
\label{tab:rate-conventions}
\begin{tabular}{@{}p{0.25\linewidth}p{0.66\linewidth}@{}}
\toprule
Comparison & What is held fixed and what it tests \\
\midrule
Endpoint matching & The signed parameters $x_c$ and $x_h$ are fixed. This isolates the effect of
the endpoint populations and is the convention for the complete-reset formulas. \\
Matched total rate & The sum $\Gamma_{\downarrow}+\Gamma_{\uparrow}=\gamma$ is fixed. This
compares passive and inverted contacts with the same elementary kinetic scale. \\
Dynamical matching & A measured relaxation time or Liouvillian gap is fixed. This is the
appropriate control when two physical reservoirs have different microscopic rate amplitudes. \\
\bottomrule
\end{tabular}
\end{table}

The parameter $\gamma$ fixes the elementary relaxation scale, while the collective $N$ dependence
enters through the matrix elements in Eq.~\eqref{eq:ladder}.

The weak-coupling Davies-form reduction assumes weak system--bath coupling, a bath correlation time
short compared with the relaxation time, and sufficient separation of relevant Bohr frequencies
\cite{davies1974markovian}. The ratio in Eq.~\eqref{eq:kms} determines the fixed point, whereas a
finite-contact prediction also depends on the rate convention summarized in
Table~\ref{tab:rate-conventions}.

All finite-contact figures reported below use the matched-total-rate convention unless a different choice is stated explicitly. For the secular-controlled finite-contact benchmark we additionally scale the elementary rate and contact duration at fixed exposure $u=\gamma\tau$ so that the collective rates
satisfy the explicit secular-control criterion introduced in Sec.~\ref{sec:validity}. The complete-reset results do not depend on this convention because the contacts have erased their incoming-state memory.

The model therefore combines three distinct choices: restriction to one permutation-symmetric
sector, a Markovian Davies-form reservoir description, and the matched-total-rate convention for
finite-contact comparisons. Numerical path truncation and the large-$N$ approximations introduced
later are separate approximations.

\subsection{Anti-passivity: the broader finite-dimensional mechanism}
\label{sec:anti-passivity}

Although the engine studied below has commuting work strokes, it is useful first to isolate the
more general finite-dimensional reason why an inverted population can increase extracted work,
independently of the special language of negative temperature.
Consider a finite Hamiltonian with ordered energies
$\epsilon_0\leq\epsilon_1\leq\cdots\leq\epsilon_{d-1}$ and a population vector
$\boldsymbol p$ written in the same order. A population is called \emph{passive} when it
decreases as the energy increases, and \emph{anti-passive} when the ordering is reversed. A
unitary stroke between two energy eigenbases induces the transition matrix
$\mathsf P_{m|n}=|\langle m_f|U|n_i\rangle|^2$. This matrix is doubly stochastic: its rows and
columns each sum to one. The matrix need not be a permutation, because a finite-time stroke can
spread one initial energy population over several final levels.

Define the energy defect of this stroke by
\begin{equation}
    \delta E(\mathsf P,\boldsymbol p)
    \equiv\boldsymbol\epsilon^{\mathsf T}
    (\mathsf P\boldsymbol p-\boldsymbol p).
    \label{eq:anti-passivity-defect}
\end{equation}

It is the energy increase produced by the stroke when the same ordered spectrum is used as the
reference. The following sign rule then follows.

\begin{proposition}[Passivity sign rule]
\label{prop:anti-passivity}
For a passive population, $\delta E(\mathsf P,\boldsymbol p)\geq0$ for every doubly
stochastic $\mathsf P$. For an anti-passive population, the inequality is reversed.
\end{proposition}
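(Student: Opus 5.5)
The argument uses majorization: a doubly stochastic matrix can only make a population vector more mixed. This is then combined with the ordering of the spectrum through summation by parts.

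\emph{Step 1: rewrite the energy via tail sums.} Write $\boldsymbol q\equiv\mathsf P\boldsymbol p$ and introduce the upper tail sums
$$S_k(\boldsymbol r)\equiv\sum_{i\ge k} r_i,\qquad k=1,\ldots,d-1 .$$
Abel summation gives, for any probability vector $\boldsymbol r$,
$$\boldsymbol\epsilon^{\mathsf T}\boldsymbol r=\epsilon_0+\sum_{k=1}^{d-1}(\epsilon_k-\epsilon_{k-1})\,S_k(\boldsymbol r).$$
Since both $\boldsymbol p$ and $\boldsymbol q$ are normalized, it follows that
$$\delta E=\sum_{k\ge1}(\epsilon_k-\epsilon_{k-1})\bigl[S_k(\boldsymbol q)-S_k(\boldsymbol p)\bigr].$$
Every gap $\epsilon_k-\epsilon_{k-1}$ is nonnegative. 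It therefore suffices to show $S_k(\boldsymbol q)\ge S_k(\boldsymbol p)$ for each $k$ in the passive case, and the reverse inequality in the anti-passive case.

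\emph{Step 2: tail sums of a passive population are minimal.} For a passive $\boldsymbol p$ the entries are nonincreasing, so $S_k(\boldsymbol p)$ is the sum of the $d-k$ smallest entries of $\boldsymbol p$. Each entry of the output is a convex combination $q_m=\sum_n\mathsf P_{m|n}p_n$. Summing over the tail gives
$$S_k(\boldsymbol q)=\sum_n w_n p_n,\qquad w_n\equiv\sum_{m\ge k}\mathsf P_{m|n}.$$
The weights satisfy $0\le w_n\le1$ because columns of $\mathsf P$ sum to one. They also satisfy $\sum_n w_n=d-k$ because rows of $\mathsf P$ sum to one.

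The key point, which is the only step needing care, is the following elementary lemma. Minimizing $\sum_n w_np_n$ over weights with $0\le w_n\le1$ and total mass $d-k$ is achieved by putting unit weight on the $d-k$ smallest $p_n$. This is a fractional-knapsack or exchange argument: moving weight from a larger $p_n$ to a smaller one never increases the sum. Hence $S_k(\boldsymbol q)\ge S_k(\boldsymbol p)$, and therefore $\delta E\ge0$.

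This is exactly the statement that $\mathsf P\boldsymbol p\prec\boldsymbol p$, which could alternatively be invoked via the Hardy–Littlewood–Pólya/Birkhoff theorem. The direct argument avoids needing that theorem.

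\emph{Step 3: the anti-passive case.} If $\boldsymbol p$ is nondecreasing, the tail $S_k(\boldsymbol p)$ is now the sum of the $d-k$ largest entries. The same lemma, applied as a maximization, gives $S_k(\boldsymbol q)\le S_k(\boldsymbol p)$, so $\delta E\le0$.

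Equivalently, one can apply the passive result to the reversed spectrum $-\boldsymbol\epsilon$, for which $\boldsymbol p$ is passive after reordering. Both ingredients used above, the ordering of the spectrum and double stochasticity, are independent of the labeling, so this is legitimate.

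No earlier results of the paper are needed beyond the definitions of $\mathsf P$ and $\delta E$ in Eq.~\eqref{eq:anti-passivity-defect}.
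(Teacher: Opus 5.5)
Your proof is correct, and it takes a different route from the paper's. The paper argues in two lines from classical theorems. Birkhoff--von Neumann writes $\mathsf P$ as a convex combination of permutation matrices. The rearrangement inequality says the passive (anti-passive) ordering minimizes (maximizes) $\boldsymbol\epsilon^{\mathsf T}\Pi\boldsymbol p$ over permutations $\Pi$. Linearity then carries the inequality over to $\mathsf P$.

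You bypass Birkhoff's theorem entirely. Abel summation turns $\delta E$ into a nonnegative combination of tail-sum differences. Your capped-weight (fractional-knapsack) lemma then shows that every tail sum $S_k$ can only grow under $\mathsf P$ when $\boldsymbol p$ is passive, and it uses only nonnegativity and the unit row and column sums of $\mathsf P$. Strictly, you prove these position-specific tail inequalities rather than the full majorization $\mathsf P\boldsymbol p\prec\boldsymbol p$ you mention. The same lemma applied to an arbitrary set of $d-k$ rows would give that too, and the proposition only needs what you actually prove.

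\emph{What your route buys.} It is self-contained and elementary. Its intermediate result does not involve the level values at all: the tail inequalities say that $\mathsf P\boldsymbol p$ first-order stochastically dominates $\boldsymbol p$ in the level ordering. This makes explicit why $\delta E\geq0$ holds for every spectrum with that ordering at once.

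\emph{What the paper's route buys.} It is brief, and it gives the extremal picture of a doubly stochastic map as a mixture of relabelings, bracketed by the passive and anti-passive rearrangements. The paper reuses exactly this bracket in the proof of Proposition~\ref{prop:differential-bound}. Your lemma could deliver that bracket as well, if run once as a minimization and once as a maximization over the capped weights.
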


\begin{proof}
By the Birkhoff--von Neumann theorem, every doubly stochastic matrix is a convex combination of permutation
matrices. The rearrangement inequality states that the passive ordering minimizes
$\boldsymbol\epsilon^{\mathsf T}\Pi\boldsymbol p$ over permutations $\Pi$, whereas the
anti-passive ordering maximizes it. Taking the same convex combination preserves the corresponding
inequality. This is the finite-dimensional form of the minimum-work/passivity principle
\cite{pusz1978passive,allahverdyan2004maximal,allahverdyan2005minimal}.
\end{proof}

Physically, the statement is simple. Mixing or redistributing a passive population cannot lower
its energy below the passive ordering, whereas the same redistribution can lower the energy of an
anti-passive population. The latter therefore contains unitary-extractable work.

For a homothetic Otto pair, in which the low-gap spectrum is a uniform rescaling of the high-gap
spectrum, let the high-gap energies be $\boldsymbol\epsilon$ and the low-gap energies be
$\alpha\boldsymbol\epsilon$, with $0<\alpha<1$. If $\mathsf P_e$ and
$\mathsf P_c$ are the expansion and compression transition matrices, respectively, the extracted
work is
\begin{equation}
\begin{split}
 W_{\rm out}={}&\boldsymbol\epsilon^{\mathsf T}\boldsymbol p_h
 -\alpha\boldsymbol\epsilon^{\mathsf T}\mathsf P_e\boldsymbol p_h
 +\alpha\boldsymbol\epsilon^{\mathsf T}\boldsymbol p_c
 -\boldsymbol\epsilon^{\mathsf T}\mathsf P_c\boldsymbol p_c .
\end{split}
\label{eq:homothetic-work}
\end{equation}

The adiabatic reference has $\mathsf P_e=\mathsf P_c=\mathbb I$ and
$W_{\rm out}^{\rm ad}=(1-\alpha)\boldsymbol\epsilon^{\mathsf T}
(\boldsymbol p_h-\boldsymbol p_c)$. Defining
$\delta E_e=\delta E(\mathsf P_e,\boldsymbol p_h)$ and
$\delta E_c=\delta E(\mathsf P_c,\boldsymbol p_c)$ gives the exact comparison
\begin{equation}
    W_{\rm out}-W_{\rm out}^{\rm ad}
    =-\alpha\delta E_e-\delta E_c.
    \label{eq:homothetic-defect}
\end{equation}

Thus an anti-passive hot endpoint and a passive cold endpoint improve the adiabatic output only
when the energy decrease during expansion exceeds the compression defect after multiplication by
$\alpha$. For the commuting work strokes used here, $\mathsf P_e=\mathsf P_c=\mathbb I$ and
both defects vanish. The inverted-branch enhancement therefore originates entirely from the endpoint
populations and their resource content; nonadiabatic-friction effects are absent from the cycle
considered below.

The next question is quantitative: how much of that active-state resource can appear as
additional Otto work when the same finite-time work stroke is applied to the passive and active
states? The same comparison gives a useful bound on the differential value of an active state. Let
$\boldsymbol q$ be an anti-passive population and let $\boldsymbol\pi$ be its passive rearrangement,
so that both vectors have the same eigenvalues and
\begin{equation}
    \mathcal W_{\rm erg}
    \equiv \boldsymbol\epsilon^{\mathsf T}
    (\boldsymbol q-\boldsymbol\pi)>0
    \label{eq:general-ergotropy}
\end{equation}
is the ergotropy on the reference spectrum. Compare two cycles that use the same expansion map
$\mathsf P$ and differ only in the hot population, $\boldsymbol q$ versus
$\boldsymbol\pi$. Their extracted-work difference is
\begin{equation}
    \Delta W_{\rm act-pass}
    =\mathcal W_{\rm erg}\bigl(1-\alpha r_{\mathsf P}\bigr),
    \qquad
    r_{\mathsf P}
    \equiv\frac{\boldsymbol\epsilon^{\mathsf T}\mathsf P
    (\boldsymbol q-\boldsymbol\pi)}{\mathcal W_{\rm erg}}.
    \label{eq:differential-gain}
\end{equation}

\begin{proposition}[Tight active--passive differential bound]
\label{prop:differential-bound}
For every doubly stochastic expansion map $\mathsf P$,
$-1\leq r_{\mathsf P}\leq1$. Consequently, the differential gain factor
$\chi\equiv\Delta W_{\rm act-pass}/\mathcal W_{\rm erg}$ satisfies
\begin{equation}
    1-\alpha\leq\chi\leq1+\alpha.
    \label{eq:differential-bound}
\end{equation}

The lower and upper equalities are attained by the identity and reversal maps, respectively.
\end{proposition}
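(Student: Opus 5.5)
Since $\chi=1-\alpha r_{\mathsf P}$ and $0<\alpha<1$, the bound \eqref{eq:differential-bound} follows at once from $-1\le r_{\mathsf P}\le 1$, with $\chi=1-\alpha$ at $r_{\mathsf P}=1$ and $\chi=1+\alpha$ at $r_{\mathsf P}=-1$. The plan is therefore to bound the linear functional $f(\mathsf P)\equiv\boldsymbol\epsilon^{\mathsf T}\mathsf P(\boldsymbol q-\boldsymbol\pi)$ over the Birkhoff polytope by $\pm\mathcal W_{\rm erg}$. Because $f$ is linear in $\mathsf P$, the Birkhoff--von Neumann theorem (as in the proof of Proposition~\ref{prop:anti-passivity}) reduces this to permutation matrices $\Pi$. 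The bound will then extend to all doubly stochastic $\mathsf P$ by convexity.

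For a permutation $\Pi$, write $f(\Pi)=\boldsymbol\epsilon^{\mathsf T}\Pi\boldsymbol q-\boldsymbol\epsilon^{\mathsf T}\Pi\boldsymbol\pi$. Both $\boldsymbol q$ and $\boldsymbol\pi$ are rearrangements of the same multiset of eigenvalues, and so are $\Pi\boldsymbol q$ and $\Pi\boldsymbol\pi$. By the rearrangement inequality, every rearrangement $\boldsymbol s$ of these eigenvalues satisfies $\boldsymbol\epsilon^{\mathsf T}\boldsymbol\pi\le\boldsymbol\epsilon^{\mathsf T}\boldsymbol s\le\boldsymbol\epsilon^{\mathsf T}\boldsymbol q$. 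Here $\boldsymbol\pi$ is the passive, decreasing ordering that minimizes the energy, and $\boldsymbol q$ is the anti-passive, increasing ordering that maximizes it. Since $\boldsymbol q$ is anti-passive, it is exactly the reversal of $\boldsymbol\pi$ (up to ties, which do not change the energy). Applying these bounds to each term gives
\[
f(\Pi)\le \boldsymbol\epsilon^{\mathsf T}\boldsymbol q-\boldsymbol\epsilon^{\mathsf T}\boldsymbol\pi=\mathcal W_{\rm erg},
\qquad
f(\Pi)\ge \boldsymbol\epsilon^{\mathsf T}\boldsymbol\pi-\boldsymbol\epsilon^{\mathsf T}\boldsymbol q=-\mathcal W_{\rm erg}.
\]
Dividing by $\mathcal W_{\rm erg}>0$ gives $|r_\Pi|\le1$. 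Taking convex combinations then gives $|r_{\mathsf P}|\le1$ for every doubly stochastic $\mathsf P$.

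Attainment is checked directly on the two named maps. For $\mathsf P=\mathbb I$, $f=\mathcal W_{\rm erg}$ by definition \eqref{eq:general-ergotropy}, so $r=1$ and $\chi=1-\alpha$. For the reversal permutation $R$ (index $i\mapsto d-1-i$), we have $R\boldsymbol q=\boldsymbol\pi$ and $R\boldsymbol\pi=\boldsymbol q$, because the anti-passive vector is the reversed passive one. Hence $f(R)=\boldsymbol\epsilon^{\mathsf T}(\boldsymbol\pi-\boldsymbol q)=-\mathcal W_{\rm erg}$, so $r=-1$ and $\chi=1+\alpha$.

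The only delicate point is the identification $R\boldsymbol q=\boldsymbol\pi$ when eigenvalues or energies are degenerate. I would handle it by noting that any tie-breaking leaves both energies $\boldsymbol\epsilon^{\mathsf T}\boldsymbol q$ and $\boldsymbol\epsilon^{\mathsf T}\boldsymbol\pi$ unchanged. This requires that $\boldsymbol\pi$ be defined as the passive rearrangement of $\boldsymbol q$, as the statement specifies.
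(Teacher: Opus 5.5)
Your proposal is correct and is essentially the paper's argument: both use Birkhoff--von Neumann plus the rearrangement inequality to place $\boldsymbol\epsilon^{\mathsf T}\mathsf P\boldsymbol q$ and $\boldsymbol\epsilon^{\mathsf T}\mathsf P\boldsymbol\pi$ in $[\boldsymbol\epsilon^{\mathsf T}\boldsymbol\pi,\boldsymbol\epsilon^{\mathsf T}\boldsymbol q]$, which bounds their difference by $\mathcal W_{\rm erg}$ in magnitude; you make the permutation reduction explicit and also verify attainment by $\mathbb I$ and the reversal map, which the paper only asserts. On your tie remark, the clean justification is that anti-passive means $q_0\le q_1\le\cdots\le q_{d-1}$ in the index order of the energies, so $R\boldsymbol q=\boldsymbol\pi$ holds exactly as vectors; invariance of $\boldsymbol\epsilon^{\mathsf T}\boldsymbol q$ under reordering within a degenerate energy block would not by itself fix $\boldsymbol\epsilon^{\mathsf T}R\boldsymbol q$.
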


\begin{proof}
Because $\boldsymbol\pi$ is a permutation of $\boldsymbol q$, the energy of either population
after a doubly stochastic map lies between the passive and anti-passive rearrangement energies.
Their difference is therefore bounded in magnitude by
$\boldsymbol\epsilon^{\mathsf T}(\boldsymbol q-\boldsymbol\pi)=\mathcal W_{\rm erg}$,
which proves $|r_{\mathsf P}|\leq1$. Substitution into Eq.~\eqref{eq:differential-gain}
gives Eq.~\eqref{eq:differential-bound}.
\end{proof}

The factor $\chi$ compares the work value of two hot states at fixed work-stroke map; preparation
and renewal costs enter separately. For the commuting cycle $\mathsf P=\mathbb I$, so
$r_{\mathsf P}=1$ and
$\Delta W_{\rm act-pass}=(1-\alpha)\mathcal W_{\rm erg}$. This recovers the matched
complete-reset result below after identifying $\alpha=\Omega_c/\Omega_h$.

Optimizing an active state at fixed relative-entropy athermality instead asks which population
distribution maximizes a chosen energy objective under a fixed formation budget
~\cite{brandao2013resource,horodecki2013fundamental}. Restricting the analysis to the Gibbs-Dicke
family isolates the collective matrix elements, inversion symmetry, and resource comparison of the
engine studied here. Equations~\eqref{eq:differential-gain}--\eqref{eq:differential-bound} then
set the ergotropy-scaled interval for the active--passive work difference under a general doubly
stochastic work-stroke map.

\section{Exact Statistics of the Stationary Cycle}
\label{sec:means}

%============================================================
\subsection{Reduction to Populations and the Master Equation}
\label{sec:means-reduction}

For Dicke-diagonal initial states, the collective Davies-form generator closes on the populations
$p(m,t)\equiv\langle j,m|\rho(t)|j,m\rangle$. The vector
$\vec p=(p(-j),p(-j+1),\ldots,p(j))^{\mathsf T}$ has $N+1$ nonnegative real entries and is
normalized to one, reducing the $(N+1)^2$ density-matrix variables to a classical population
process.

Starting from a state diagonal in the basis $\{|j,m\rangle\}$, the generator $\mathcal L_\alpha$
itself preserves diagonality. This follows from a direct calculation: since $J_+|j,m\rangle$ and
$J_-|j,m\rangle$ are each proportional to a single basis state, the term $J_\pm\rho J_\pm^\dagger$
of a diagonal $\rho$ remains diagonal; and since $J_+^\dagger J_+=J_-J_+$ and $J_-^\dagger
J_-=J_+J_-$ are functions exclusively of $J_z$, and therefore also diagonal in this basis, the
anticommutator of the Lindblad dissipator preserves diagonality in the same way.

The unitary contribution also vanishes on a Dicke-diagonal state because $H_\alpha$ is diagonal in the same basis. Since the collective ladder operators $J_\pm$ connect only neighboring Dicke states, the population dynamics is a one-dimensional continuous-time birth--death process on the rungs $m=-j,\ldots,j$: from a given rung $m$, population can jump only to $m+1$ or $m-1$. Equivalently, the population generator is tridiagonal in the Dicke basis. The master equation is
\begin{equation}
\begin{aligned}
\dot p(m)={}&\Gamma_{\downarrow,\alpha}(j-m+1)(j+m)p(m-1)\\
&+\Gamma_{\uparrow,\alpha}(j+m+1)(j-m)p(m+1)\\
&-\Gamma_{\downarrow,\alpha}(j-m)(j+m+1)p(m)\\
&-\Gamma_{\uparrow,\alpha}(j+m)(j-m+1)p(m).
\end{aligned}
\label{eq:master-pop}
\end{equation}

The first two terms are probability currents into rung $m$ from its two neighbors, whereas the last two terms describe probability leaving $m$ toward those neighbors. The Dicke matrix elements set the rung-dependent transition factors, while
$\Gamma_{\downarrow,\alpha}$ and $\Gamma_{\uparrow,\alpha}$ set the elementary
bath-induced downward and upward energy-transfer rates.

This birth--death equation can be written in the usual vector form
$\dot{\vec p}=R_\alpha\vec p$ as
\begin{equation}
\begin{split}
    [R_\alpha]_{m+1,m} = \Gamma_{\downarrow,\alpha}\,(j-m)(j+m+1), \\
    [R_\alpha]_{m-1,m} = \Gamma_{\uparrow,\alpha}\,(j+m)(j-m+1),
\end{split}
\label{eq:Ralpha}
\end{equation}
with the diagonal entries fixed by conservation of probability.

The two off-diagonal entries in Eq.~\eqref{eq:Ralpha} have a direct interpretation. The rate from
$m$ to $m+1$ is $\Gamma_{\downarrow,\alpha}(j-m)(j+m+1)$, and the rate from $m$ to $m-1$ is
$\Gamma_{\uparrow,\alpha}(j+m)(j-m+1)$. The factors multiplying the elementary bath rates are
the squared collective matrix elements. Thus the bath fixes the upward/downward bias, while the
Dicke ladder fixes how that bias is amplified at each rung.

%============================================================
\subsection{Limit Cycle, Average Work and Heat}
\label{sec:means-cycle}

Let $\vec p_1$ denote the stationary population immediately before compression and $\vec p_3$ the
population immediately after the hot contact. Each isochore is represented by the nonnegative,
column-stochastic propagator $K_\alpha\equiv e^{R_\alpha\tau_\alpha}$. Because the unitary
strokes preserve the populations, $\vec p_3=K_h\vec p_1$ and
$\vec p_1^{\,\prime}=K_c\vec p_3$, where $\vec p_1^{\,\prime}$ denotes the population at the start of the following cycle. The
limit-cycle condition, $\vec p_1^{\,\prime}=\vec p_1$, closes the problem into the fixed-point
equation $\vec p_1=K_cK_h\vec p_1$.

For strictly positive upward and downward rates and positive contact durations, every neighboring
transition is dynamically accessible and each contact propagator is strictly positive. The product
$K_cK_h$ is therefore a primitive stochastic matrix. By the Perron--Frobenius theorem, it has a
simple dominant eigenvalue at $1$ and a unique, strictly positive eigenvector, to which any initial
population converges geometrically. The limit cycle $\vec p_1$ therefore exists and is unique.

Any observable diagonal in $J_z$ reduces to an average over this same distribution $\vec p$. In
particular, we define the average magnetization as
$\langle m\rangle_{\vec p}\equiv\sum_{m=-j}^{j}m\,p(m)=\langle J_z\rangle$, so that the internal
energy at any point of the cycle is
$U=\mathrm{Tr}(H_\alpha\rho)=-\Omega_\alpha\langle J_z\rangle=-\Omega_\alpha\langle m\rangle_{\vec p}$.

The work and heat exchanged at each stroke follow directly from how $U$ varies, that is, through
the change in $\Omega_\alpha$ during the unitary strokes and through the relaxation of the
population during the isochores.

For example, the compression changes the energy of a state with label $m$ by
$[-\Omega_hm]-[-\Omega_cm]=-\Delta\Omega m$. Averaging with the corner-1 distribution gives
$\langle W_{1\to2}\rangle=-\Delta\Omega M_1$. The expansion gives
$\langle W_{3\to4}\rangle=+\Delta\Omega M_3$. The hot contact changes the energy at fixed
$\Omega_h$, and the cold contact changes it at fixed $\Omega_c$.

We therefore denote by $M_1\equiv\langle m\rangle_{\vec p_1}$ and $M_3\equiv\langle m\rangle_{\vec
p_3}$ the values of this magnetization at the two vertices of the cycle, and, going through the
four strokes in this order, we obtain at the stationary cycle
\begin{equation}
\begin{aligned}
\langle W\rangle&=\Delta\Omega\,(M_3-M_1),\\
\langle Q_h\rangle&=-\Omega_h(M_3-M_1),\\
\langle Q_c\rangle&=\Omega_c(M_3-M_1).
\end{aligned}
\label{eq:means-general}
\end{equation}

Thus all mean thermodynamic observables of the commuting cycle are controlled by a single
experimentally accessible quantity, the change $M_3-M_1$ in collective polarization across the
hot contact. Measuring the corner magnetizations therefore determines the mean work and both mean
heats without reconstructing the full density matrix. With
$\Delta\Omega\equiv\Omega_h-\Omega_c$, the cold-contact expression uses the stationary
condition: the mean magnetization after the cold contact is again $M_1$. Before imposing that
condition, the cold heat is $\Omega_c(M_3-M_5)$, where $M_5$ is the mean at the end of the cold
contact. Thus the average first law follows from $M_5=M_1$, whereas the corresponding pathwise
identity is given explicitly in Sec.~\ref{sec:finite-path}.

Substituting Eq.~\eqref{eq:means-general} into the efficiency definition cancels the common
magnetization difference,
\begin{equation}
    \eta = \frac{\Delta\Omega}{\Omega_h} = 1-\frac{\Omega_c}{\Omega_h},
    \label{eq:efficiency}
\end{equation}
for $0<\Omega_c<\Omega_h$ at engine operating points with $M_3\neq M_1$ and
$\langle Q_h\rangle>0$. The collective reservoir controls $M_3-M_1$ and therefore the work and
power,
\begin{equation}
    P_{\gross} = \frac{\Delta\Omega\,|M_3-M_1|}{\tau_{\rm cyc}}.
    \label{eq:power-general}
\end{equation}

For the zero-duration commuting work strokes used in the numerical calculations,
$\tau_{\rm cyc}=\tau_h+\tau_c$. The endpoint efficiency remains independent of $N$ and of the
contact durations in this commuting protocol.

%============================================================
\subsection{Gibbs-Dicke Thermodynamics}
\label{sec:means-gibbs}

In the complete-thermalization limit ($\tau_\alpha\to\infty$), $\vec p_1$ and $\vec p_3$ each relax
to the stationary state of the respective generator $R_\alpha$. For a birth--death chain, this
stationary state satisfies the detailed-balance condition
$[R_\alpha]_{m+1,m}p(m)=[R_\alpha]_{m,m+1}p(m+1)$. Substituting Eq.~\eqref{eq:Ralpha}, the factor
$(j-m)(j+m+1)$ cancels on both sides, leaving
$p(m+1)/p(m)=\Gamma_{\downarrow,\alpha}/\Gamma_{\uparrow,\alpha}=e^{\beta_\alpha\Omega_\alpha}$ by
the rate ratio in Eq.~\eqref{eq:kms}. Since this ratio does not depend on $m$, the stationary
state is the \emph{Gibbs-Dicke state},
\begin{equation}
    \rho_x \equiv \frac{e^{xJ_z}}{Z_j(x)}, \qquad p_x(m)=\frac{e^{xm}}{Z_j(x)}, \qquad
    x\equiv\beta_\alpha\Omega_\alpha.
    \label{eq:gibbs-dicke}
\end{equation}

At $x=0$ this expression is understood by continuity: $p_0(m)=1/(N+1)$ and
$Z_j(0)=N+1$.

The normalization is a finite geometric sum of $N+1$ terms, which can be evaluated in closed form as
\begin{equation}
    Z_j(x) = \sum_{m=-j}^j e^{xm} = \frac{\sinh[(N+1)x/2]}{\sinh(x/2)}.
    \label{eq:Zj}
\end{equation}

Defining $\mu_N(x)\equiv\partial_x\ln Z_j(x)$, the average magnetization in the Gibbs-Dicke state
is
\begin{equation}
    \mu_N(x) = \frac{N+1}{2}\coth\!\left(\frac{(N+1)x}{2}\right) - \frac12\coth\!\left(\frac x2
    \right).
    \label{eq:muN}
\end{equation}

In the limit $\tau_\alpha\to\infty$, therefore, $M_1\to\mu_N(x_c)$ and $M_3\to\mu_N(x_h)$, and
Eq.~\eqref{eq:means-general} becomes a closed-form expression in $N$, $x_h$, and $x_c$,
\begin{equation}
\begin{split}
    \langle W\rangle & = \Delta\Omega\big[\mu_N(x_h)-\mu_N(x_c)\big], \\
    \langle Q_h\rangle & = -\Omega_h\big[\mu_N(x_h)-\mu_N(x_c)\big], \\
    \langle Q_c\rangle & = \Omega_c\big[\mu_N(x_h)-\mu_N(x_c)\big].
\end{split}
\label{eq:means-complete-reset}
\end{equation}

These expressions define the complete-reset endpoint benchmark for the energy exchanged per
cycle. In the engine regime the extracted work per cycle is
\begin{equation}
    W_{\rm out}^{\rm reset}
    \equiv-\langle W\rangle
    =\Delta\Omega\big[\mu_N(x_c)-\mu_N(x_h)\big].
    \label{eq:work-complete-reset}
\end{equation}

In the literal dynamical limit $\tau_h,\tau_c\to\infty$, the power defined in
Eq.~\eqref{eq:P-eta-gross} vanishes because the cycle duration diverges. We therefore use the
complete-reset solution to state work-per-cycle and fluctuation results; power is evaluated at
finite contact times in Sec.~\ref{sec:means-nearpole} and Sec.~\ref{sec:finite-fluct}.

The efficiency, Eq.~\eqref{eq:efficiency}, remains unchanged in the complete-reset endpoint
benchmark, and it is the magnetization $\mu_N(x)$ that carries all of the dependence on $N$.

%============================================================
\subsection{Collective Scaling: Saturation and Extensivity}
\label{sec:means-scaling}

\begin{figure}[tbp]
    \centering
    \includegraphics[width=0.85\linewidth]{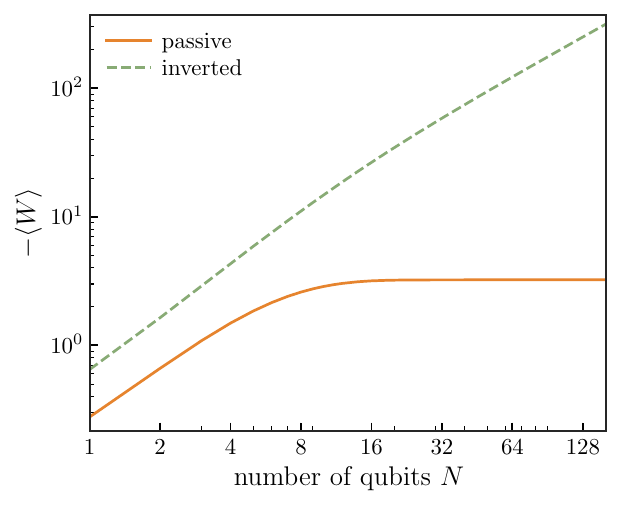}
    \caption{Complete-reset extracted work per cycle as a function of $N$ for passive and inverted
    hot branches, illustrating saturation, Eq.~\eqref{eq:W-saturation}, versus extensivity,
    Eq.~\eqref{eq:W-extensivity}. Parameters: $\Omega_c=1$, $\Omega_h=3$, $x_c=1$,
    $x_h=+0.375$ for the passive branch, and $x_h=-0.375$ for the inverted branch. Finite-contact
    power is shown separately in Fig.~\ref{fig:power-tau}.}
    \label{fig:saturation}
\end{figure}

It is natural to ask how $\mu_N(x)$ behaves as $N\to\infty$ with the dimensionless parameter $x$
fixed. This means that the dimensionless thermal bias $\beta\Omega$ is held fixed; it is not the
same as holding $\beta$ fixed while changing $\Omega$. Since
$\coth(y)\to\mathrm{sgn}(y)$ exponentially fast as $|y|\to\infty$, the first term of
Eq.~\eqref{eq:muN} saturates at $\pm N/2$ according to the sign of $x$, while the second term
remains finite; expanding both, one obtains
\begin{equation}
\begin{split}
    \mu_N(x) & = \frac N2 - \bar n(x) + O\big(Ne^{-N x}\big) \quad (x>0), \\
    \mu_N(x) & = -\frac N2 + \bar n(-x) + O\big(Ne^{-N|x|}\big) \quad (x<0),
\end{split}
\label{eq:muN-asymptotic}
\end{equation}
with $\bar n(x)\equiv1/(e^x-1)$ the same Bose--Einstein-type function from the thermodynamics of a
single bosonic mode.

Substituting into Eq.~\eqref{eq:work-complete-reset}, the term linear in $N$ cancels when the
two endpoint states are on the passive side of the spectrum, but it survives when the hot endpoint
is inverted. For an engine-compatible passive ordering $0<x_h<x_c$,
\begin{align}
W_{\rm out}^{\rm reset}
&\longrightarrow\Delta\Omega\left[\bar n(x_h)-\bar n(x_c)\right]\notag\\
&=O(1),\qquad 0<x_h<x_c,
\label{eq:W-saturation}\\
W_{\rm out}^{\rm reset}
&\longrightarrow\Delta\Omega\left[N-\bar n(-x_h)-\bar n(x_c)\right]\notag\\
&=O(N),\qquad x_h<0.
\label{eq:W-extensivity}
\end{align}

The extracted work per cycle therefore \emph{saturates} on the passive branch and grows
\emph{extensively} on the inverted branch, as illustrated in Fig.~\ref{fig:saturation}. 
These two scaling behaviors can be understood from the location of the endpoint populations 
on the Dicke ladder. For fixed $x > 0$, the Gibbs-Dicke distribution remains confined to an 
$\mathcal{O}(1)$-wide boundary layer near the low-energy pole $m = j$: writing $k = j - m$, 
one has $p(k) \propto e^{-xk}$ and $\langle k \rangle \to \bar{n}(x)$. Both cold and passive-hot 
endpoints remain near the same pole as $N$ grows, and increasing $N$ mainly adds distant rungs 
that carry negligible probability. Their mean magnetizations differ by only $\mathcal{O}(1)$, 
which explains the saturation of the passive work. 

For an inverted hot state, the hot distribution is instead localized within $\mathcal{O}(1)$ 
rungs of the opposite pole $m = -j$. The cold and hot distributions are then separated by 
essentially the full Dicke ladder, $\Delta m = N - \mathcal{O}(1)$, so the gap modulation 
converts an $\mathcal{O}(N)$ polarization displacement into extensive work. The inverted 
scaling therefore originates from the macroscopic separation of the endpoint populations, 
not from a broadening of either distribution. Endpoint extensivity is distinct from collective kinetic enhancement: the former follows from the location of the stationary populations, whereas the latter arises from the Dicke transition matrix elements and governs the power at finite contact time.

%============================================================
\subsection{Power at Fixed \texorpdfstring{$\tau$}{tau} Near a Pole}
\label{sec:means-nearpole}

At finite contact times, Eq.~\eqref{eq:power-general} depends on the stationary corner
magnetizations and generally has no closed form. There is, however, an analytically controlled
large-$N$ regime with $\tau_h,\tau_c$ fixed.

Let $k\equiv j-m$ be the distance to the north pole $m=j$. Rewriting Eq.~\eqref{eq:Ralpha} in
terms of $k$, the transition rates $k\to k-1$ (toward the pole) and $k\to k+1$ (away from it)
become $\Gamma_{\downarrow,\alpha}\,k(N-k+1)$ and $\Gamma_{\uparrow,\alpha}\,(N-k)(k+1)$,
respectively.

For a positive-$x_\alpha$ contact, for which the north pole is the stable endpoint, and for
$k=O(1)$ fixed as $N\to\infty$, both rates become \emph{linear} in $k$ at leading order, in
contrast with the original rates in $m$, which are quadratic and therefore unable to close the
equation for the mean. This local linearization allows one to show that $\langle k\rangle$ obeys
\begin{equation}
    \frac{d\langle k\rangle}{dt} = N\Gamma_{\uparrow,\alpha} -
    N\gamma\tanh\!\Big(\frac{x_\alpha}2\Big)\langle k\rangle,
    \label{eq:k-ode}
\end{equation}
whose solution, for an initial condition $\langle k(0)\rangle$, is
\begin{equation}
    \langle k(t)\rangle = \bar n(x_\alpha) + \big[\langle k(0)\rangle-\bar n(x_\alpha)\big]\,
    e^{-N\gamma\tanh(x_\alpha/2)\,t}.
    \label{eq:k-sol}
\end{equation}

Near a pole, therefore, the Dicke ladder reduces at leading order to a linear birth--death process,
equivalently the leading Holstein--Primakoff bosonic description, with a relaxation rate enhanced by
a factor of $N$ through the collective coupling to the common bath operator. In this regime the
collective enhancement is kinetic rather than energetic: the equilibrium distance from the pole
remains $O(1)$, but the time needed to relax that local displacement is shortened by the factor
$N$. This is why increasing $N$ can accelerate a passive contact even though the complete-reset
work of the passive branch itself does not become extensive.

For the passive branch ($x_c>0$ and $x_h>0$, both fixed), no stroke of the cycle moves the
population near the opposite pole; the two isochores relax populations concentrated near the same
pole $m=j$ throughout the entire protocol. By Eq.~\eqref{eq:k-sol}, the exponential factor
$e^{-N\gamma\tanh(x_\alpha/2)\tau_\alpha}\to0$ for any fixed $\tau_\alpha>0$, however small, as
long as $N$ is large enough; that is, each isochore reaches complete thermalization in this limit,
regardless of its duration. Substituting $\langle k\rangle_1\to\bar n(x_c)$ and $\langle
k\rangle_3\to\bar n(x_h)$ into Eq.~\eqref{eq:power-general}, one obtains, for $\tau_h=\tau_c=\tau$,
\begin{equation}
    P_{\text{passive}}^\infty(\tau) = \frac{\Delta\Omega}{2\tau}\big[\bar n(x_h)-\bar n(x_c)\big],
    \label{eq:power-infty-pass}
\end{equation}
valid for any fixed positive $\tau$ within the reduced birth--death model. A microscopic
Davies interpretation additionally requires the collective rates to remain small compared with the
Bohr frequencies. Figure~\ref{fig:power-tau} therefore uses the rate-scaled variables
$u=\gamma\tau$ and $P/\gamma$, which preserve the finite-contact maps under the controlled
rate--time rescaling specified below.

\begin{figure}[tbp]
    \centering
    \includegraphics[width=0.85\linewidth]{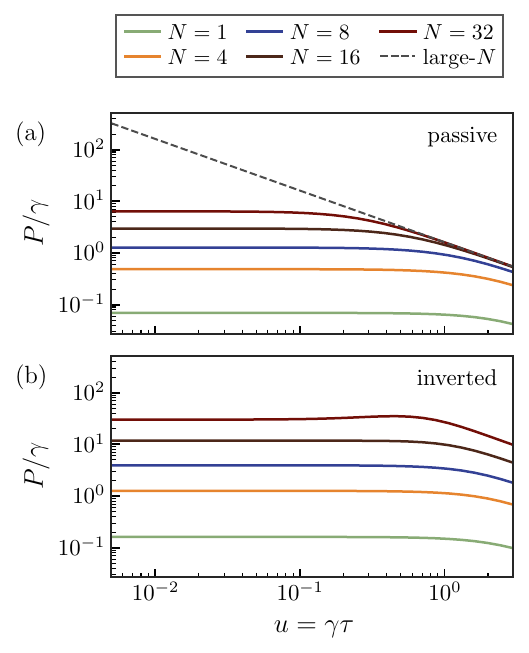}
    \caption{Rate-scaled gross power $P/\gamma$ as a function of the dimensionless contact exposure
    $u=\gamma\tau$ for passive and inverted hot branches and several values of $N$. The dashed
    reference curve in the passive panel is Eq.~\eqref{eq:power-infty-pass} divided by $\gamma$;
    the near-pole derivation does not cover the pole-to-pole traversal of the inverted branch.
    Parameters: $\Omega_c=1$, $\Omega_h=3$, $x_c=1$, $x_h=\pm0.375$, and
    $u\in[0.005,3]$. For the controlled benchmark $\gamma=4\times10^{-5}$, so
    $\tau=u/\gamma$.}
    \label{fig:power-tau}
\end{figure}

The same argument does not apply to the inverted branch. Since $x_h<0$, the hot isochore relaxes
the population toward the \emph{opposite} pole, $m=-j$, but the population entering this isochore
arrives concentrated near the pole $m=+j$, inherited from the cold isochore. At every cycle,
therefore, the population would need to cross the entire Dicke ladder, a distance of order $N$, and
Eq.~\eqref{eq:k-ode}, valid only for $k=O(1)$, does not describe this crossing. This is why no
closed-form formula analogous to Eq.~\eqref{eq:power-infty-pass} exists for the inverted branch at
fixed $\tau$ within the near-pole derivation. The complete-reset result,
Eq.~\eqref{eq:W-extensivity}, establishes extensive \emph{work per cycle} on the inverted branch,
but does not by itself determine the asymptotic finite-contact power. A coordinate transformation
about the opposite pole can give a local approximation there, but it does not solve the finite-time
traversal from one pole to the other. The near-pole formula is therefore restricted to the passive
fixed-contact limit established above.

\section{Complete-Reset Work and Hot-Heat Fluctuations}
\label{sec:fluctuations}

The mean work and heat depend only on the magnetizations at corners 1 and 3 after the stationary
cycle condition is imposed. Fluctuations require more information. A single cycle can begin at one
Dicke rung, leave the hot contact at another, and finish the cold contact at a third. The first two
labels are correlated whenever the hot contact is finite, while the third label is needed for the
actual cold-contact heat. We first derive the closed-form work and hot-heat statistics in the
complete-reset limit. Finite-contact joint work--heat statistics are treated separately in
Sec.~\ref{sec:finite-fluct}. The complete-reset analysis follows the standard two-point measurement (TPM) \cite{denzler2020efficiency,talkner2007fluctuation} and characteristic-function framework for quantum work statistics, specialized to the finite Dicke ladder, where commuting work strokes and Dicke-diagonal states make stochastic energy changes directly identifiable with the corner labels.

\subsection{Random variables and the joint distribution}
\label{sec:fluct-variables}

Let $m$ be the Dicke label at corner 1, $n$ the label at corner 3, and $r$ the label after the cold
contact at corner 5. Because the work strokes preserve the label, a trajectory with labels
$(m,n,r)$ has
\begin{equation}
\begin{aligned}
W&=\Delta\Omega(n-m),\\
Q_h&=-\Omega_h(n-m),\\
Q_c&=\Omega_c(n-r).
\end{aligned}
\label{eq:trajectory-WQ}
\end{equation}

The work and hot heat are tightly coupled on every trajectory, but the cold heat also depends on
the final label $r$. The pathwise first law is therefore
\begin{equation}
    W+Q_h+Q_c=\Omega_c(m-r)
    =E_c(r)-E_c(m).
    \label{eq:trajectory-first-law}
\end{equation}

It becomes the cycle first law after averaging in the stationary state, where
$\langle m\rangle=\langle r\rangle$. The joint probability of the three labels is
\begin{equation}
    \mathcal P(m,n,r)=p_1(m)[K_h]_{n,m}[K_c]_{r,n}.
    \label{eq:joint-finite-contact}
\end{equation}

The first factor describes how often the cycle starts at $m$, the second describes the hot-contact
transition, and the third describes the cold-contact transition. This joint distribution is
required whenever either contact is incomplete.

\subsection{Independence in the complete-reset limit}
\label{sec:fluct-independence}

For strictly positive upward and downward rates, the hot generator has one stationary distribution
and all other population modes decay. Consequently,
\begin{equation}
    [K_h]_{n,m}=[e^{R_h\tau_h}]_{n,m}
    \xrightarrow[\tau_h\to\infty]{}p_{x_h}(n),
    \label{eq:Kh-reset-limit}
\end{equation}
independently of the incoming label $m$. The cold contact similarly gives
$p_1(m)\to p_{x_c}(m)$. Therefore,
\begin{equation}
    \mathcal P_{\infty}(m,n,r)
    =p_{x_c}(m)p_{x_h}(n)p_{x_c}(r).
    \label{eq:factorization}
\end{equation}

Thus $m$, $n$, and $r$ are independent random variables only in this complete-reset limit. The
work and hot-heat cumulants use the independent pair $(m,n)$, whereas cold-heat cumulants also
use the independent cold draw $r$.

\begin{figure*}[tbp]
    \centering
    \includegraphics[width=\linewidth]{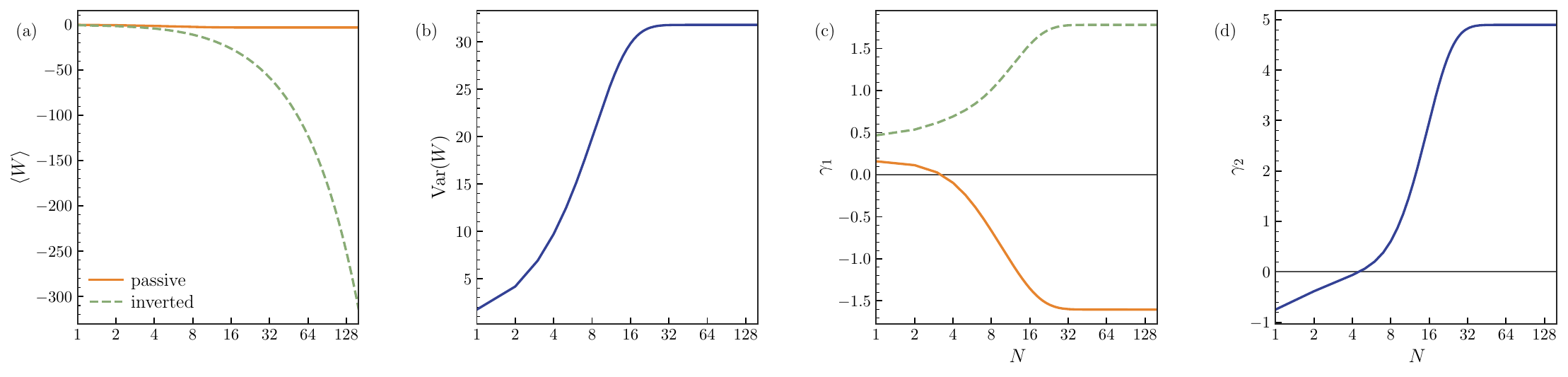}
\caption{Complete-reset work cumulants as functions of $N$ for matched passive and
inverted hot branches. The panels show (a) the mean work $\langle W\rangle$,
(b) the variance $\mathrm{Var}(W)$, (c) the standardized skewness $\gamma_1$,
and (d) the excess kurtosis $\gamma_2$. The variance and excess kurtosis coincide for
the matched passive and inverted branches and are therefore shown as single
curves in panels (b) and (d). Parameters are $\Omega_c=1$,
$\Omega_h=3$, $x_c=1$, and $|x_h|=0.375$. The curves follow
Eq.~\eqref{eq:cumulants-W} and display the passive--inverted parity structure
of the complete-reset work statistics.}
    \label{fig:work-cumulants}
\end{figure*}

\subsection{Characteristic function and closed-form cumulants}
\label{sec:fluct-cumulants}

In the complete-reset limit, the work performed in a single cycle is determined by two independent energy samples: the Dicke rung $m$ drawn from the cold endpoint and the rung $n$ drawn from the hot endpoint. Since the work strokes are deterministic for a given $m$ and $n$, sampling different pairs generates a distribution of work values,
\begin{equation}
P(W)=\sum_{m,n}p_{x_c}(m)p_{x_h}(n)\, \delta\!\left[W-\Delta\Omega(n-m)\right]. 
\label{eq:work-distribution-reset}
\end{equation}

This is the finite-ladder form of the TPM construction of work statistics. In
the present protocol the required energy basis is the Dicke basis at both ends of each work stroke.
Because the states are already diagonal in that basis and the work Hamiltonians commute, the
projective dephasing normally associated with a TPM construction does not alter the ensemble
dynamics. The energy labels therefore enter the distribution as classical population variables
within the reduced model.

The characteristic function of $P(W)$ \cite{Fei2022},
\begin{equation}
G(u)\equiv \left\langle e^{iuW}\right\rangle ,
\end{equation}
determines the full work statistics through Fourier transformation. Since $W=\Delta\Omega n-\Delta\Omega m$ is the difference of two independent endpoint contributions, this function factorizes into hot and cold components. Using Eq.~\eqref{eq:trajectory-WQ}, Eq.~\eqref{eq:factorization}, and $p_x(m)=e^{xm}/Z_j(x)$ yields:
\begin{equation}
    G(u)=
    \frac{Z_j(x_h+iu\Delta\Omega)}{Z_j(x_h)}
    \frac{Z_j(x_c-iu\Delta\Omega)}{Z_j(x_c)}.
    \label{eq:Gu-factored}
\end{equation}

Taking the logarithm of $G(u)$ separates the hot and cold endpoint fluctuations into independent work cumulants, bypassing repeated convolutions of the full work distribution. The first two cumulants yield the mean work and its variance, while higher orders quantify non-Gaussian features of the cycle-to-cycle statistics, such as asymmetry and excess tail weight. The $r$th work cumulant is
$\kappa_r(W)=i^{-r}\left.\partial_u^r\ln G(u)\right|_{u=0}$.

For the magnetization distribution, define
$\kappa_r^{(m)}(x)\equiv\partial_x^r\ln Z_j(x)$, with $\kappa_1^{(m)}(x)=\mu_N(x)$. The chain rule applied to Eq.~\eqref{eq:Gu-factored} then yields
\begin{equation}
    \kappa_r(W)=\Delta\Omega^r
    \left[\kappa_r^{(m)}(x_h)+(-1)^r\kappa_r^{(m)}(x_c)\right].
    \label{eq:cumulants-W}
\end{equation}
For $r=1$ this is the mean work already obtained in Eq.~\eqref{eq:means-complete-reset}; for $r=2$ it is the work variance. Since $Z_j(x)$ is even, $\kappa_r^{(m)}(x)$ has parity $(-1)^r$. For matched hot parameters $x_h=+a$ and $x_h=-a$, the even work cumulants are identical,
whereas the hot-state contribution to every odd cumulant changes sign. On the Dicke ladder, the
matched passive and inverted hot populations are mirror images, $p_{-a}(m)=p_{+a}(-m)$.
Population inversion moves the hot distribution to the opposite pole without changing its local
width, leaving the even hot-state contributions unchanged while reversing the odd hot-state
contributions. The two matched hot branches contribute equally to the work variance, even though their mean work can differ by order $N$.

Differentiating Eq.~\eqref{eq:muN} gives
\begin{equation}
\begin{aligned}
\kappa_2^{(m)}(x)=\partial_x\mu_N(x)
={}&\frac14\,\operatorname{csch}^2\!\left(\frac{x}{2}\right)\\
&-\frac{(N+1)^2}{4}\operatorname{csch}^2\!\left(\frac{(N+1)x}{2}\right).
\end{aligned}
\label{eq:kappa2m}
\end{equation}
and hence
\begin{equation}
    \operatorname{Var}(W)=\Delta\Omega^2
    \left[\kappa_2^{(m)}(x_h)+\kappa_2^{(m)}(x_c)\right].
    \label{eq:var-W}
\end{equation}

Higher cumulants follow by further differentiation. The mean, variance, standardized skewness, and
excess kurtosis are shown in Fig.~\ref{fig:work-cumulants}.

The finite-ladder TPM and cumulant construction applies the homothetic framework of Ref.~\cite{DamasHomothetic2026}. Complete reset decouples the endpoint samples, whereas finite contacts break this factorization and restore endpoint correlations (Sec.~\ref{sec:finite-fluct}).

\subsection{Reliability and the independent-copy benchmark}
\label{sec:fluct-reliability}

A large mean output alone does not guarantee a reproducible engine, as the useful work signal competes with intrinsic cycle-to-cycle fluctuations. This reliability is quantified by the signal-to-noise ratio
\begin{equation}
    \mathcal R_W\equiv\frac{|\langle W\rangle|}{\sqrt{\operatorname{Var}(W)}}
    =\frac{|\mu_N(x_h)-\mu_N(x_c)|}
    {\sqrt{\kappa_2^{(m)}(x_h)+\kappa_2^{(m)}(x_c)}}.
    \label{eq:RW-def}
\end{equation}

The units of energy cancel in this ratio. Experimentally, $\mathcal R_W$ compares the mean
extracted-work signal with its run-to-run standard deviation. Values much larger than unity
correspond to an output whose sign and magnitude are increasingly well resolved from one cycle
realization to the next. 

%Its reciprocal, $\phi_W\equiv\sqrt{\operatorname{Var}(W)}/|\langle W\rangle|$, is the relative work fluctuation; the two quantities carry the same information but are not numerically equal.

For fixed nonzero $x$, Eq.~\eqref{eq:kappa2m} approaches a constant as $N$ increases. The
variance therefore remains $O(1)$ on both matched branches. The mean is $O(1)$ on the passive branch
and $O(N)$ on the inverted branch, so
\begin{equation}
    \mathcal R_W=O(1)\quad(x_h>0),\qquad
    \mathcal R_W=O(N)\quad(x_h<0).
    \label{eq:RW-scaling}
\end{equation}

Geometrically, $\mathcal R_W$ in the complete-reset regime compares the separation between the cold and hot endpoint centers against their combined Dicke-rung distribution widths. On the passive branch, both distributions lie within $O(1)$ rungs of the same pole, restricting both separation and width to $O(1)$. On the inverted branch, the hot distribution reflects to the opposite pole; its width remains $O(1)$, but its center separates from the cold distribution by $O(N)$ rungs. The linear reliability growth is therefore driven entirely by this macroscopic signal increase, as the complete-reset work distribution does not broaden.

To distinguish this collective finite-ladder behavior from ordinary statistical self-averaging,
we compare it with $N$ independent one-constituent engines. The independent-copy benchmark is
obtained by adding the work of $N$ statistically independent one-constituent engines. Cumulants add, giving
\begin{equation}
\begin{aligned}
\langle W_N\rangle&=N\langle W_1\rangle,\\
\operatorname{Var}(W_N)&=N\operatorname{Var}(W_1),\\
\mathcal R_{W,\ind}(N)&=\sqrt N\,\mathcal R_{W,1}=O(\sqrt N).
\end{aligned}
\label{eq:RW-independent}
\end{equation}

In this limit, the inverted collective reliability exceeds the independent-copy benchmark by a factor $O(\sqrt N)$, whereas the passive collective branch remains $O(1)$ and is asymptotically below the $O(\sqrt N)$ independent benchmark, as shown in Fig.~\ref{fig:reliability}. Collective-engine studies have reported enhanced stability or reliability~\cite{souza2022collective,jaseem2023quadratic}; the matched Dicke benchmark used here separates the passive and inverted branches under the same finite-ladder statistics. This comparison fixes the constituent number and elementary one-constituent dynamics; it does not equate the accessible state-space dimension or the preparation resources of the symmetric collective architecture and $N$ independent engines.

\begin{figure}[tbp]
    \centering
    \includegraphics[width=0.8\linewidth]{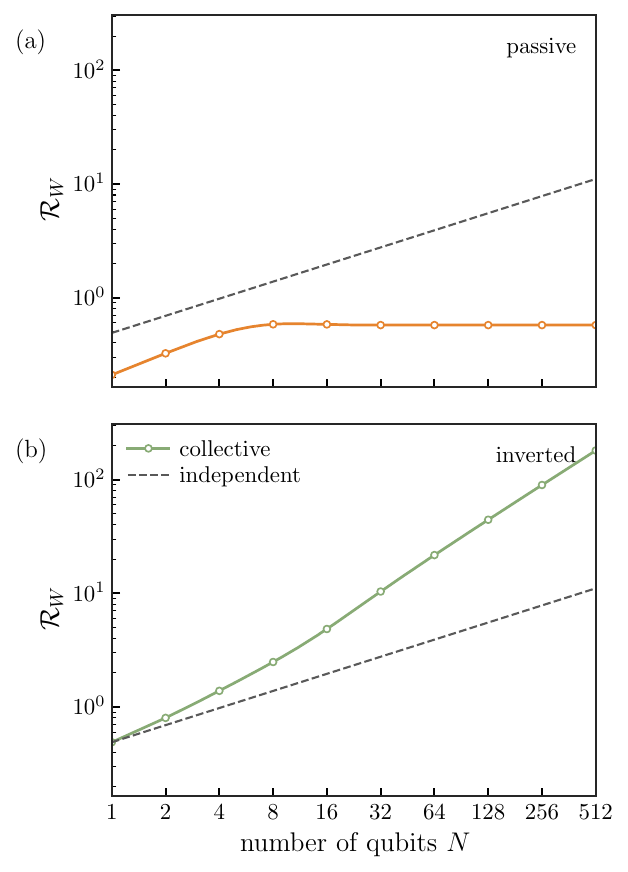}
    \caption{Complete-reset work reliability and independent-copy benchmark.
(a) Passive collective branch, for which the reliability saturates at
$\mathcal R_W=O(1)$, compared with the $O(\sqrt{N})$ reliability of
$N$ independent one-constituent engines. (b) Inverted collective branch,
whose reliability grows as $\mathcal R_W=O(N)$ and therefore exceeds the
independent-copy $O(\sqrt{N})$ scaling at large $N$. The collective
scalings reflect the Dicke-ladder geometry: inversion separates the hot and
cold endpoint distributions by $O(N)$ rungs while their local widths remain
$O(1)$. Curves are evaluated from Eqs.~\eqref{eq:RW-def} and
\eqref{eq:RW-independent} for $\Omega_c=1$, $\Omega_h=3$, $x_c=1$, and
$|x_h|=0.375$.}
    \label{fig:reliability}
\end{figure}

\section{Thermodynamic Resource Accounting}
\label{sec:cost}

The inverted branch can produce more gross work and larger reliability than the passive branch.
A resource-corrected comparison must also account for the nonequilibrium resource carried by the
inverted hot state. We therefore separate gross cycle output from reversible state-formation costs
before comparing the two operating modes. It is useful to distinguish two questions. The engine
calculation asks how much work is obtained from the supplied hot state. The resource calculation
asks how much reversible work is already embodied in preparing that state relative to a specified
passive or equilibrium reference. The second quantity is not heat exchanged during the Otto cycle
and is not a microscopic model of the pump that produces the inversion.

\subsection{Free energy and the cost boundary}
\label{sec:cost-setup}

For a state $\rho$ and a reference temperature $T>0$, the nonequilibrium free energy is
$F_T(\rho;H)\equiv\Tr(\rho H)-T S(\rho)$, with $S(\rho)\equiv-\Tr(\rho\ln\rho)$. For the Gibbs-Dicke state $\rho_x=e^{xJ_z}/Z_j(x)$ and the hot Hamiltonian
$H_h=-\Omega_hJ_z$, this becomes
\begin{equation}
\begin{aligned}
S_N(x)&=\ln Z_j(x)-x\mu_N(x),\\
F(x;\Omega_h,T)&=-\Omega_h\mu_N(x)-T S_N(x).
\end{aligned}
\label{eq:F-def}
\end{equation}

The reference equilibrium state at temperature $T$ has dimensionless parameter
$x_{\rm eq}=\Omega_h/T$. Its free-energy difference from $\rho_x$ is
\begin{equation}
    F(x;\Omega_h,T)-F(x_{\rm eq};\Omega_h,T)
    =T\,D\!\left(\rho_x\middle\|\rho_{x_{\rm eq}}\right)\geq0,
    \label{eq:free-energy-relative-entropy}
\end{equation}
where $D(\rho\|\sigma)=\Tr[\rho(\ln\rho-\ln\sigma)]$ is the quantum relative entropy. For
Dicke-diagonal states it reduces to the ordinary relative entropy of the population vectors.
Equation~\eqref{eq:free-energy-relative-entropy} gives the reversible state-formation lower bound
for the chosen Hamiltonian and reference temperature. For an experimental realization, this bound
should therefore be read as a best-case state-preparation cost. A real inversion protocol may
require more work because of finite-time control, dissipation, imperfect state preparation, and
reservoir reset. The same state function applies to an arbitrary Dicke-diagonal population vector,
so at finite contact time the resource comparison can be evaluated directly on the actual stationary
hot-corner populations rather than on the fully relaxed target Gibbs states. Device-level
preparation additionally depends on the physical pump, control, and reservoir-reset implementation.

We use three state-level quantities. The total formation bound is
$C_{\rm tot}(x;T)\equiv F(x;\Omega_h,T)-F(\Omega_h/T;\Omega_h,T)$. The passive total bound is obtained
by replacing $x$ by $+a$, and the excess inversion cost relative to the matched passive state is
$C_{\rm exc}(a)\equiv F(-a;\Omega_h,T)-F(+a;\Omega_h,T)$. The total bound depends on the equilibrium reference and thermodynamic boundary, whereas the excess
inversion cost is independent of $T$ for the matched symmetric pair. Physical reservoir preparation
and maintenance enter only in a device-level accounting beyond these state functions.

\subsection{Collective and product reference states}
\label{sec:cost-reference}

The dimension of the reference state is part of the cost definition. A \emph{collective-sector
reference} is normalized only on the selected $j=N/2$ ladder and asks how much reversible work is
needed within an already available symmetric architecture. A \emph{product reference} treats the
constituents as $N$ independent two-level systems and also charges correlations and support
restriction associated with the collective state. To make this distinction explicit, let
$y=\beta_{\rm ref}\Omega_h>0$ and define the one-constituent Gibbs state
\begin{equation}
    \tau_y=\frac{e^{y\sigma_z/2}}{2\cosh(y/2)},
    \qquad
    \tau_y^{\otimes N}
    =\frac{e^{yJ_z}}{[2\cosh(y/2)]^N}.
    \label{eq:tau-product}
\end{equation}

For an $N$-constituent state $\rho_N$ with one-body marginals $\rho_r$, the relative entropy to this product reference obeys
\begin{equation}
\begin{aligned}
D\!\left(\rho_N\middle\|\tau_y^{\otimes N}\right)
&=\sum_{r=1}^N D(\rho_r\|\tau_y)+I_N(\rho_N),\\
I_N(\rho_N)
&\equiv D\!\left(\rho_N\middle\|\bigotimes_{r=1}^N\rho_r\right)\\
&=\sum_{r=1}^N S(\rho_r)-S(\rho_N)\geq0.
\end{aligned}
\label{eq:corr-decomp}
\end{equation}

$I_N(\rho_N)$ is the total correlation (or multi-information): it measures the correlations contained in $\rho_N$ relative to the uncorrelated product of its one-body marginals, and vanishes if and only if $\rho_N=\bigotimes_{r=1}^N\rho_r$. For $N=2$, it reduces to the usual quantum mutual information.

For a permutation-symmetric state, all one-body marginals are equal and the first term is
$N D(\rho_1\|\tau_y)$. The second term is the total correlation stored in the $N$-body state.
The product and collective-sector references therefore represent different preparation boundaries.
The reference used for each cost comparison is stated explicitly.

The entropy derivative needed below follows from
$\mu_N(x)=\partial_x\ln Z_j(x)$:
\begin{equation}
    \frac{dS_N}{dx}=-x\,\kappa_2^{(m)}(x),
    \qquad
    \kappa_2^{(m)}(x)=\partial_x\mu_N(x)\geq0.
    \label{eq:SN-derivative}
\end{equation}

Hence $S_N(x)$ decreases for $x>0$ and increases for $x<0$.

\subsection{Ergotropy and the excess inversion cost}
\label{sec:cost-ergotropy}

Ergotropy is the maximum energy that can be extracted from a state through a cyclic unitary process, in which the Hamiltonian returns to its initial form:
\begin{equation}
    W_{\rm erg}(\rho,H)
    \equiv\Tr(\rho H)-\min_U\Tr(U\rho U^\dagger H).
    \label{eq:ergotropy-general}
\end{equation}

The minimizing state is the passive rearrangement: its eigenvalues are assigned in decreasing order
to the energy levels in increasing order. For the inverted Gibbs-Dicke state $\rho_{-a}$, the
passive rearrangement is $\rho_{+a}$ because $p_{-a}(-m)=p_{+a}(m)$ and
$E_h(m)=-\Omega_hm$. Since $S_N(-a)=S_N(+a)$,
\begin{equation}
\begin{split}
    W_{\rm erg}(a)
    &\equiv\Tr[H_h(\rho_{-a}-\rho_{+a})] \\
    &=2\Omega_h\mu_N(a).
\end{split}
\label{eq:Werg-def}
\end{equation}

The same expression is the excess free-energy difference:
\begin{equation}
    F(-a;\Omega_h,T)-F(+a;\Omega_h,T)=W_{\rm erg}(a).
    \label{eq:F-Werg-identity}
\end{equation}

Because the matched passive and inverted states have the same entropy, this ergotropy is exactly
their reversible excess free-energy difference at fixed $H_h$. We use $W_{\rm erg}$ as the excess
state-level cost of replacing the passive rearrangement by the inverted state. Experimentally, it
can be obtained from state tomography and a calibrated $H_h$, or from the energy change under a
unitary passive rearrangement. The total bound $C_{\rm tot}$ additionally requires a calibrated
reference temperature and the reconstructed-state entropy.

\subsection{Absolute and comparative balances}
\label{sec:cost-comparative}

For a complete-reset cycle with $x_h=-a$, the gross extracted work is
\begin{equation}
    W_{\rm gross,inv}(N)=-\langle W\rangle_{-a}
    =\Delta\Omega[\mu_N(x_c)+\mu_N(a)].
    \label{eq:Wgross-inv}
\end{equation}

Using the excess inversion cost as a state-level diagnostic gives
\begin{equation}
\begin{aligned}
\mathcal B_{\rm exc}(N)
&\equiv W_{\rm gross,inv}(N)-W_{\rm erg}(a)\\
&=\Delta\Omega\,\mu_N(x_c)-(\Omega_h+\Omega_c)\mu_N(a).
\end{aligned}
\label{eq:BN-def}
\end{equation}

The balance $\mathcal B_{\rm exc}$ compares the gross inverted output with the additional
state-level resource relative to the passive rearrangement.

For fixed positive $x_c$ and $a$, the large-$N$ expansion gives
\begin{equation}
\begin{aligned}
\mathcal B_{\rm exc}(N)={}&-\Omega_cN+(\Omega_h+\Omega_c)\bar n(a)\\
&-\Delta\Omega\,\bar n(x_c)+O(Ne^{-cN}).
\end{aligned}
\label{eq:BN-asymptotic}
\end{equation}

Here $\bar n(x)=1/(e^x-1)$ and one may take $c=\min(a,x_c)>0$. The negative linear term implies that this excess-cost
balance is negative for sufficiently large $N$, although it can be positive for small $N$. 

The negative large-$N$ behavior of $\mathcal B_{\rm exc}$ shows that gross inverted output is not a net advantage once the active-state resource is charged. As shown in Fig.~\ref{fig:cost-absolute}, the gross inverted output continues to grow with $N$, while the excess-cost balance decreases and becomes negative, separating gross output from resource-corrected performance. Population inversion can nevertheless remain operationally useful when the active resource is persistent, externally supplied, or when power and reliability are prioritized over net energetic gain. In those cases, the preparation, maintenance, and refresh mechanisms must be included in the thermodynamic boundary.

\begin{figure}[tbp]
    \centering
    \includegraphics[width=0.85\linewidth]{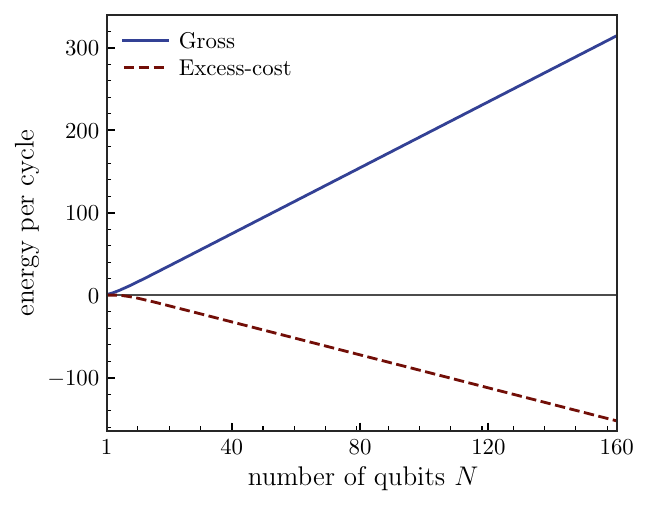}
\caption{Absolute resource balance for the inverted branch.
The solid curve shows the complete-reset gross extracted work
$W_{\rm gross,inv}$, while the dashed curve shows the excess-cost balance
$\mathcal B_{\rm exc}=W_{\rm gross,inv}-W_{\rm erg}$ defined in
Eq.~\eqref{eq:BN-def}. The gross output grows with $N$, but the
resource-corrected balance decreases and becomes negative because the
ergotropy cost of the inverted state exceeds the additional output.
Parameters are $\Omega_c=1$, $\Omega_h=3$, $x_c=1$, and $a=0.375$.}
    \label{fig:cost-absolute}
\end{figure}

The more stringent comparison keeps the passive alternative available. Consider two complete-reset
cycles with the same cold state $\rho_{x_c}$ and matched hot parameters $+a$ and $-a$. The gross
work gain from replacing the passive hot state by the inverted one is
\begin{equation}
    \Delta W_{\rm gross}
    =2\Delta\Omega\,\mu_N(a)
    =\frac{\Delta\Omega}{\Omega_h}W_{\rm erg}(a)
    =\eta W_{\rm erg}(a).
    \label{eq:matched-gross}
\end{equation}

Equation~\eqref{eq:matched-gross} gives a direct physical interpretation of the inverted gain:
the additional Otto work is a fraction $\eta$ of the unitary-extractable energy already stored in
the inverted hot state. The collective scaling can amplify the size of this resource with $N$, but
it does not change the conversion fraction set by the Otto geometry.

Charging the excess cost once per cycle gives
\begin{equation}
    \Delta W_{\rm net}
    =-(1-\eta)W_{\rm erg}(a)
    =-\frac{\Omega_c}{\Omega_h}W_{\rm erg}(a)<0.
    \label{eq:matched-net}
\end{equation}

This conclusion is independent of $N$ and is stronger than the absolute balance: even if the
inverted cycle covers its excess cost relative to a zero-output reference, it does not outperform
the already available matched passive cycle.

The same conclusion holds for any passive comparator $\rho_b$ that is itself an engine with the
same cold state, namely $0<b<x_c$. Its gross gain is
$\Delta W_{\rm gross}=\Omega_h\eta[\mu_N(a)+\mu_N(b)]$, while its excess state-level cost is
\begin{equation}
\begin{aligned}
\Delta C={}&F(-a;\Omega_h,T_c)-F(b;\Omega_h,T_c)\\
={}&\Omega_h[\mu_N(a)+\mu_N(b)]-T_c[S_N(a)-S_N(b)].
\end{aligned}
\label{eq:general-cost}
\end{equation}

Here $T_c=1/\beta_c$ and $T_cx_c=\Omega_c$. Thus
\begin{equation}
    \Delta W_{\rm net}
    =T_c[S_N(a)-S_N(b)]-\Omega_c[\mu_N(a)+\mu_N(b)].
    \label{eq:general-net}
\end{equation}

If $a\geq b$, both terms on the right-hand side are nonpositive. If $0<a<b<x_c$,
Eq.~\eqref{eq:SN-derivative} gives
\begin{equation}
\begin{aligned}
S_N(a)-S_N(b)&=\int_a^b t\,\kappa_2^{(m)}(t)\,dt\\
&<x_c[\mu_N(b)-\mu_N(a)]\\
&<x_c[\mu_N(a)+\mu_N(b)].
\end{aligned}
\label{eq:entropy-bound}
\end{equation}

Multiplying by $T_c$ yields $T_c[S_N(a)-S_N(b)]<\Omega_c[\mu_N(a)+\mu_N(b)]$, and therefore $\Delta W_{\rm net}<0$. The entropy contribution can reduce the formation-cost difference but cannot compensate the polarization term required for the inverted cycle to outperform an already operating passive engine within this reference boundary. As shown in Fig.~\ref{fig:cost-comparative}, this conclusion holds both for the matched passive comparison and for the broader class of passive comparators $0<b<x_c$. The result applies to complete reset, the symmetric Dicke reference, and the reversible state-level cost difference.

\begin{figure}[tbp]
    \centering
    \includegraphics[width=0.85\linewidth]{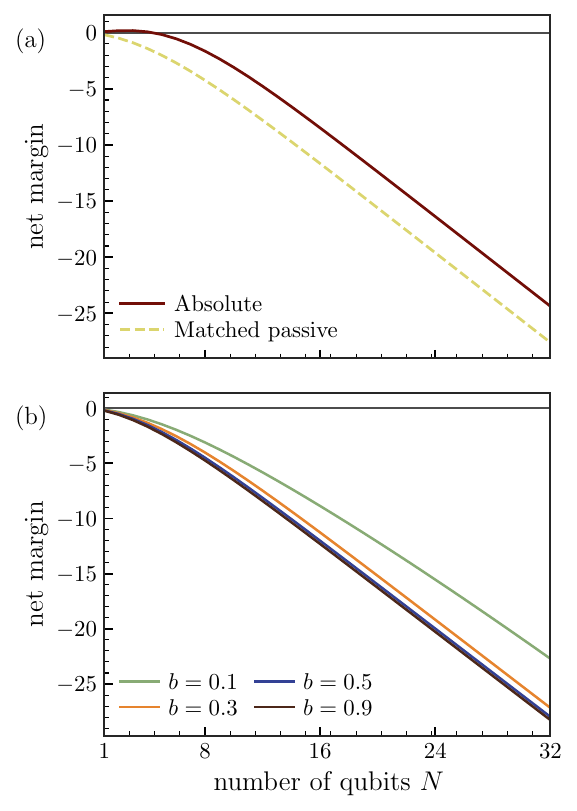}
\caption{Resource-corrected comparison of passive and inverted operation.
    Panel (a) contrasts the absolute excess-cost balance with the incremental
    net gain relative to the matched passive engine. Panel (b) shows the
    passive-to-inverted net margin for passive comparators $0<b<x_c$,
    evaluated from Eq.~\eqref{eq:general-net}. In both comparisons,
    charging the reversible state-level resource removes the gross inverted
    advantage over an operating passive engine. Parameters are
    $\Omega_c=1$, $\Omega_h=3$, $x_c=1$, and $a=0.375$.}
    \label{fig:cost-comparative}
\end{figure}

\subsection{Conditional amortization benchmark}
\label{sec:cost-reuse}

The comparison above charges the reversible excess state-formation resource once per cycle. To
examine a different accounting boundary, suppose instead that one external preparation charge
$W_{\rm erg}(a)$ is allocated across an integer number $\ell$ of otherwise identical cycles, while
the active hot state required by each cycle is re-established without any additional charged excess
resource during that interval. Under this shared-cost assumption, the amortized excess balance per
cycle is
\begin{equation}
    \Delta W_{\rm net}^{(\ell)}
    =\eta W_{\rm erg}(a)-\frac{W_{\rm erg}(a)}{\ell}
    =\left(\eta-\frac1\ell\right)W_{\rm erg}(a).
    \label{eq:reuse-gain}
\end{equation}

Its sign changes when
\begin{equation}
    \ell>\ell^\star,\qquad
    \ell^\star\equiv\frac1\eta=\frac{\Omega_h}{\Omega_h-\Omega_c}.
    \label{eq:reuse-threshold}
\end{equation}

Because $\ell$ is an integer, the first positive balance occurs at the smallest integer strictly
larger than $\ell^\star$. The threshold follows from the stipulated sharing of one preparation
charge across several cycles. A physical realization of this accounting boundary requires a
persistent reservoir, battery, catalytic resource, or pump whose degradation, refresh, maintenance,
and control costs determine whether the threshold survives at device level.

\section{Finite-Contact Fluctuations and Approximation Hierarchy}
\label{sec:finite-fluct}

The complete-reset formulas in Sec.~\ref{sec:fluctuations} are not the formulas used when the
contacts are finite. A finite contact retains information about the incoming Dicke rung, producing
correlations across the corners of one cycle and a dependence of the Markovian propagator output on
the initial state. This memory reflects incomplete relaxation within the Markovian population
dynamics rather than non-Markovian reservoir memory. This section gives the exact finite-state
construction and then states separately what is approximated in the large-$N$ analysis.

\subsection{Exact finite-contact path probabilities}
\label{sec:finite-path}

Let $m_1$, $m_3$, and $m_5$ be the Dicke labels at corners 1, 3, and 5, respectively. A complete
one-cycle path has probability
\begin{equation}
    \mathcal P(m_1,m_3,m_5)
    =p_1(m_1)[K_h]_{m_3,m_1}[K_c]_{m_5,m_3}.
    \label{eq:three-corner-path}
\end{equation}

A single stochastic realization does not necessarily return the working medium to its initial energy. The cycle closes only statistically: stationarity ensures identical distributions at corners 1 and 5, even though $m_5\neq m_1$ on individual trajectories. The trajectory-level imbalance $W+Q_h+Q_c$ represents the fluctuating change in the working-medium internal energy, whose ensemble average vanishes in the stationary cycle.

At the stationary cycle, the marginal distribution of $m_5$ equals that of $m_1$, but the two
labels are not independent for a finite contact. The work and heat assigned to this path are
\begin{equation}
\begin{aligned}
W&=\Delta\Omega(m_3-m_1),\\
Q_h&=-\Omega_h(m_3-m_1),\\
Q_c&=\Omega_c(m_3-m_5).
\end{aligned}
\label{eq:path-work-heat}
\end{equation}

The work and hot heat are tightly coupled on each path, but the cold heat includes the final
corner label. The pathwise first law is
\begin{equation}
    W+Q_h+Q_c=\Omega_c(m_1-m_5)
    =E_c(m_5)-E_c(m_1).
    \label{eq:path-first-law}
\end{equation}

Only its stationary average vanishes, because $\langle m_5\rangle=\langle m_1\rangle$.

The exact one-cycle moments can be obtained without sampling. For any function $f$,
\begin{equation}
    \langle f\rangle_{\rm cyc}
    =\sum_{m_1,m_3,m_5}\mathcal P(m_1,m_3,m_5)
    f(m_1,m_3,m_5).
    \label{eq:path-moment}
\end{equation}

In particular, the work variance can be written in terms of endpoint moments as
\begin{equation}
    \operatorname{Var}(W)=\Delta\Omega^2
    \left[\operatorname{Var}(m_3)+\operatorname{Var}(m_1)
    -2\operatorname{Cov}(m_1,m_3)\right],
    \label{eq:finite-var-decomp}
\end{equation}
where
\begin{equation}
    \operatorname{Cov}(m_1,m_3)
    =\sum_{m_1,m_3}p_1(m_1)[K_h]_{m_3,m_1}m_1m_3-M_1M_3.
    \label{eq:finite-covariance}
\end{equation}

Equation~\eqref{eq:finite-var-decomp} separates two sources of work noise. The first two terms capture the widths of the cold and hot endpoint distributions, while the covariance quantifies the hot corner's memory of the initial rung. Complete reset erases this memory, rendering $m_3$ and $m_1$ statistically independent; the covariance vanishes, recovering Eq.~\eqref{eq:var-W}.

For finite contacts, the conditional distribution $[K_h]_{m_3,m_1}$ retains information about $m_1$. A positive covariance indicates that initial fluctuations from the mean polarization persist through the hot contact. Since work depends on the difference $m_3-m_1$, this correlated fluctuation partially cancels, reducing the work variance, whereas a negative covariance amplifies it. The covariance directly quantifies the memory of the incoming polarization after an incomplete hot contact.

Because the work strokes commute with $J_z$ and the states remain Dicke-diagonal, these trajectory
labels act as classical stochastic energy records within the reduced model, and the path
distribution corresponds directly to repeated energy-resolved cycle realizations. In noncommuting
protocols a TPM construction generally dephases the state~\cite{denzler2020efficiency,Aguilar2026};
here the states are already Dicke-diagonal and the work strokes preserve that basis.

For fixed $N$, rates, and contact durations, the finite-state trajectory statistics are exact within the collective Davies-form model. Path-sum convergence is verified by systematically lowering the probability cutoff.

\subsection{Tilted maps for joint finite-contact statistics}
\label{sec:tilted-maps}

The explicit path sum of Sec.~\ref{sec:finite-path} gives a direct description of the stochastic
thermodynamics of one cycle. It becomes less convenient, however, when one wants higher
cumulants, correlations between work and heat, or statistics accumulated over many cycles.
Full counting statistics provides a compact way of performing the same sum through a
generating function
~\cite{esposito2009nonequilibrium,Landi2024FCS}. The basic idea is simple. For a stochastic
quantity $X$, the moment-generating function
\begin{equation}
    G_X(s)=\left\langle e^{sX}\right\rangle
\end{equation}
contains its statistical moments, while derivatives of $\ln G_X(s)$ at $s=0$ give its cumulants.
For several stochastic quantities, separate counting fields are introduced and mixed derivatives
generate their correlations. This is the real-field version of the characteristic-function
construction used in Sec.~\ref{sec:fluct-cumulants}, where the corresponding weight was
$e^{iuX}$.

For a Markov process, the generating function can be constructed without listing every trajectory
explicitly. Suppose that a transition $m\to m'$ during contact $\alpha$ changes the
working-medium energy by
$E_\alpha(m')-E_\alpha(m)$. Multiplying that transition rate by
\[
   e^{\left[ s_\alpha\bigl(E_\alpha(m')-E_\alpha(m)\bigr) \right]}
\]
assigns the appropriate statistical weight to the jump. Along a trajectory, these factors
multiply, so their product is the exponential of the total heat accumulated on that contact.
The matrix exponential of the resulting generator then sums these weights over all possible
jump histories automatically. This reweighting of the physical generator is what is meant by
a \emph{tilted} generator.

For the present birth--death process, we therefore define
\begin{equation}
\begin{split}
    [R_\alpha(s_\alpha)]_{m',m} ={} [R_\alpha]_{m',m}  \times e^{\big[s_\alpha \big(E_\alpha(m')-E_\alpha(m)\big)\big]}, \\
    m'\ne m.
\end{split}
\label{eq:tilted-generator}
\end{equation}

The diagonal escape rates are left unchanged. The corresponding tilted contact propagator is
\begin{equation}
    K_\alpha(s_\alpha)
    =e^{[R_\alpha(s_\alpha)\tau_\alpha]}.
\end{equation}

At zero counting field, $R_\alpha(0)=R_\alpha$ and
$K_\alpha(0)=K_\alpha$, so the ordinary population dynamics is recovered. For
$s_\alpha\neq0$, the tilted generator is not a stochastic generator describing a modified
reservoir; it is a statistical bookkeeping device whose matrix elements carry the weights needed
to reconstruct the heat distribution.

The work strokes are even simpler because the Dicke label does not change during them. For a
given label $m$, compression contributes the deterministic work $-\Delta\Omega m$, whereas
expansion contributes $+\Delta\Omega m$. Their counting factors can therefore be represented by
the diagonal matrices
\begin{equation}
\begin{aligned}
D_{c\to h}(s_W)&=\diag_m \, e^{(-s_W\Delta\Omega m)},\\
D_{h\to c}(s_W)&=\diag_m \, e^{(+s_W\Delta\Omega m)}.
\end{aligned}
\label{eq:tilted-work}
\end{equation}

Combining the four strokes gives the joint moment-generating map for one cycle,
\begin{equation}
\begin{aligned}
\mathcal M(\boldsymbol s)={}&
K_c(s_c)D_{h\to c}(s_W)
K_h(s_h)D_{c\to h}(s_W),\\
\boldsymbol s={}&(s_W,s_h,s_c).
\end{aligned}
\label{eq:tilted-cycle}
\end{equation}

The ordering follows the physical protocol from right to left: compression is followed by the hot
contact, expansion, and finally the cold contact. Each factor therefore performs the ordinary
population propagation while attaching the appropriate statistical weight to the corresponding
work or heat increment.

Starting from the stationary population $\vec p_1$, the joint generating function for a single
cycle is
\begin{equation}
    G_1(\boldsymbol s)
    =\boldsymbol 1^{\mathsf T}
    \mathcal M(\boldsymbol s)\vec p_1 .
    \label{eq:G1}
\end{equation}

The multiplication by $\boldsymbol 1^{\mathsf T}$ simply sums over the final Dicke label, because
the final state itself is not conditioned upon when constructing the work--heat distribution.
At zero counting fields,
\begin{equation}
    \mathcal M(\boldsymbol 0)=K_cK_h,
    \qquad
    G_1(\boldsymbol 0)=1,
\end{equation}
as required by normalization.

The connection with measurable fluctuations is then immediate. For example,
\begin{equation}
\begin{aligned}
\langle W\rangle
&=\left.\frac{\partial}{\partial s_W}
        \ln G_1(\boldsymbol s)\right|_{\boldsymbol s=0},\\
\operatorname{Var}(W)
&=\left.\frac{\partial^2}{\partial s_W^2}
        \ln G_1(\boldsymbol s)\right|_{\boldsymbol s=0},\\
\operatorname{Cov}(W,Q_h)
&=\left.
\frac{\partial^2}{\partial s_W\,\partial s_h}
\ln G_1(\boldsymbol s)\right|_{\boldsymbol s=0}.
\end{aligned}
\label{eq:tilted-cumulants-example}
\end{equation}

The second derivatives have their standard statistical meaning: $\operatorname{Var}(W)=\langle W^2\rangle-\langle W\rangle^2$ is the spread of the work distribution, while $\operatorname{Cov}(W,Q_h)=\langle WQ_h\rangle-\langle W\rangle\langle Q_h\rangle$ is the correlation between work and hot-heat fluctuations.

Higher derivatives generate higher cumulants and mixed work--heat correlations. Thus the tilted-map
construction contains exactly the same trajectory statistics as the explicit positive path
probabilities of Sec.~\ref{sec:finite-path}; its advantage is that the sum over trajectories is
performed algebraically by matrix propagation. This becomes particularly useful when the cycle is
repeated, as discussed next.

\subsection{Repeated cycles and temporal correlations}
\label{sec:repeated-fluct}

The main advantage of the tilted-map representation becomes apparent when the engine is operated
repeatedly. Instead of enumerating trajectories over many cycles, one simply composes the same
counting-field-dependent cycle map. This composition retains the stochastic boundary state passed
from one cycle to the next and therefore includes temporal correlations automatically.

A stationary engine does not guarantee statistically independent cycles. Experimentally, this
distinction would appear in a time series of cycle-resolved work values: even after the average
output has reached a stationary value, a high- or low-polarization outcome at the end of one cycle
can bias the work observed in the next.

For $K$ consecutive stationary cycles, the joint generating function is
\begin{equation}
 G_K(\boldsymbol s)
 =\boldsymbol 1^{\mathsf T}\mathcal M(\boldsymbol s)^K\vec p_1.
 \label{eq:GK}
\end{equation}

The long-time scaled cumulant-generating function per cycle is
$\psi(\boldsymbol s)=\ln\Lambda_0(\boldsymbol s)$, where $\Lambda_0$ is the eigenvalue of largest modulus and
$\Lambda_0(\boldsymbol 0)=1$. The one-cycle cumulants and the long-time cumulants are different
observables unless the contacts erase the boundary memory. Iterating the tilted cycle map progressively isolates the dominant eigenmode, whose eigenvalue governs the extensive growth of the accumulated work and heat statistics. The subdominant modes dictate boundary-state relaxation and mediate inter-cycle correlations.

Consider a single work counting parameter and write
\begin{equation}
\begin{split}
 \mathcal M(s)=M_0+sM_1+\frac{s^2}{2}M_2+O(s^3),\\
 \qquad M_0=K_cK_h.
\end{split}
\label{eq:tilted-expansion}
\end{equation}

The coefficient $\lambda_1=\boldsymbol 1^{\mathsf T}M_1\vec p_1$ is the stationary mean counted work per cycle. The normalized Poisson equation
\begin{equation}
 (\id-M_0)\boldsymbol u
 =(M_1-\lambda_1\id)\vec p_1,
 \qquad
 \boldsymbol 1^{\mathsf T}\boldsymbol u=0
 \label{eq:poisson}
\end{equation}
is then solved. The vector $\boldsymbol u$ describes the first-order change of the cycle-boundary distribution induced by the counting field. It records how fluctuations generated in one cycle modify the distribution entering later cycles, and supplies the inter-cycle contribution to the long-time variance.

The second derivative of the dominant eigenvalue is
\begin{equation}
 \lambda_2=\boldsymbol 1^{\mathsf T}M_2\vec p_1
 +2\boldsymbol 1^{\mathsf T}M_1\boldsymbol u,
 \qquad
 c_{2,\infty}=\lambda_2-\lambda_1^2.
 \label{eq:longvar}
\end{equation}

Equivalently, the long-time variance per cycle is the covariance sum
\begin{equation}
 c_{2,\infty}
 =\operatorname{Var}(W_0)
 +2\sum_{q=1}^{\infty}\operatorname{Cov}(W_0,W_q).
 \label{eq:covsum}
\end{equation}

Equation~\eqref{eq:covsum} decomposes the long-time variance into single-cycle fluctuations and an inter-cycle covariance sum. Positive correlations amplify the long-time variance per cycle, whereas negative correlations induce partial cancellation, stabilizing the output. Complete erasure of the cycle-boundary memory destroys these covariances, equating the long-time and single-cycle fluctuations.

The corresponding one-cycle and long-time reliabilities are
\begin{equation}
 \mathcal R_{W,1}=\frac{|\avg W|}{\sqrt{\operatorname{Var}(W_0)}},
 \qquad
 \mathcal R_{W,\infty}=\frac{|\avg W|}{\sqrt{c_{2,\infty}}}.
 \label{eq:two-reliabilities}
\end{equation}

Positive inter-cycle correlations degrade long-time stability despite favorable single-cycle reliability, whereas anticorrelations suppress accumulated noise. The reliability of a continuously operated engine fundamentally depends on these memory effects rather than single-cycle statistics alone.

Derivatives of the matrix exponentials are evaluated with Fr\'echet derivatives or equivalent
block-exponential formulas rather than finite differences of the counting field. Equations
~\eqref{eq:tilted-expansion}--\eqref{eq:covsum} then give the finite-$N$ repeated-cycle cumulants
directly from the stationary cycle map.

Counting-field methods capture joint work--heat statistics in quantum Otto cycles~\cite{esposito2009nonequilibrium,campisi2011colloquium,
mohanta2023fullstats}, while temporal correlations govern the long-time stability of finite-time engines~\cite{xuwatanabe2022correlation}. Applied to the collective Dicke cycle, this framework reveals how the shared stochastic boundary state explicitly separates single-cycle and long-time fluctuations.

\subsection{Finite-time numerical observables and resource balance}
\label{sec:finite-numerics}

Finite-contact observables are evaluated from matrix exponentials for $K_h$ and $K_c$, the stationary
vector of $K_cK_h$, and the positive path sum in Eq.~\eqref{eq:path-moment}. The controlled reference
point uses $\Omega_c=1$, $\Omega_h=3$, $x_c=1$, $x_h=-0.375$,
$\gamma=4\times10^{-5}$, $\tau_h=\tau_c=2500$, and zero-duration commuting work strokes, so the
dimensionless exposure is $\gamma\tau_h=\gamma\tau_c=0.1$. This rate--time rescaling leaves the
contact propagators and stationary populations unchanged while placing the collective rates inside
the secular-control criterion of Sec.~\ref{sec:validity}. In physical terms, the elementary
coupling is weakened while the contact is lengthened by the same factor. The system therefore
experiences the same dimensionless relaxation exposure, but the instantaneous collective rates are
kept safely below the system frequencies used in the secular description.

Accordingly, the relative work fluctuation is unchanged by this rescaling: it is $13.64$ at
$N=1$ and $2.79$ at $N=8$, while the corresponding reliability is $0.073$ and $0.359$. Absolute powers scale with $\gamma$; dimensionless ratios and local finite-size
exponents are unchanged. We treat these as finite-$N$ finite-contact results and do not infer an
asymptotic inverted-branch power exponent from this scan.
The $O(1)$ and $O(N)$ exponents derived in Sec.~\ref{sec:means-scaling} refer to complete-reset
work per cycle; the analytical finite-contact power limit derived above is the passive result in
Eq.~\eqref{eq:power-infty-pass}.

To separate a genuinely collective finite-contact enhancement from the trivial extensivity obtained
by operating more constituents, we compare the inverted collective engine with $N$ independent
inverted one-qubit engines under the same matched-total-rate convention and the same exposure
$u$. At the reference point the two descriptions coincide for $N=1$, while the ratio
$P_{\rm coll}^{\rm inv}/P_{\rm ind}^{\rm inv}$ grows monotonically over the tested sizes and reaches
$5.96$ at $N=32$ [Fig.~\ref{fig:finite-contact-bridge}(a)]. This ratio is invariant under the common rate--time rescaling and isolates the collective Dicke
kinetics within the stated matched-total-rate convention. It is therefore a kinetic benchmark
rather than an equal-device-resource comparison.

For finite contacts, the hot corner is generally not the fully relaxed state $\rho_{\pm a}$. We
therefore evaluate the same cold-reference free energy directly on the actual stationary hot-corner
populations $p_{3,\rm inv}$ and $p_{3,\rm pass}$. The two costs below answer different
bookkeeping questions. $C_{\rm exc}^{\rm fc}$ asks for the additional state resource carried by
the inverted hot corner relative to the passive hot corner, whereas $C_{\rm tot,inv}^{\rm fc}$
asks for its total nonequilibrium distance from the chosen cold-temperature equilibrium reference.
With
$\rho_{\rm eq,c}\propto\exp[(\Omega_h/T_c)J_z]$ on the symmetric Dicke ladder, define
\begin{equation}
\begin{aligned}
C_{\rm exc}^{\rm fc}
&=F_{T_c}(p_{3,\rm inv};H_h)-F_{T_c}(p_{3,\rm pass};H_h),\\
C_{\rm tot,inv}^{\rm fc}
&=F_{T_c}(p_{3,\rm inv};H_h)-F_{T_c}(\rho_{\rm eq,c};H_h).
\end{aligned}
\label{eq:finite-contact-corner-costs}
\end{equation}

The corresponding power differences are
\begin{equation}
\begin{aligned}
\Delta P_{\rm exc}^{\rm fc}
&=P_{\rm inv}-P_{\rm pass}-\frac{C_{\rm exc}^{\rm fc}}{\tau_{\rm cyc}},\\
\Delta P_{\rm tot}^{\rm fc}
&=P_{\rm inv}-P_{\rm pass}-\frac{C_{\rm tot,inv}^{\rm fc}}{\tau_{\rm cyc}}.
\end{aligned}
\label{eq:finite-contact-corner-margins}
\end{equation}

The first line compares the actual inverted and passive hot corners on the same reversible
state-level reference. The second is a stricter benchmark that charges the inverted hot corner its
full formation bound relative to $\rho_{\rm eq,c}$ while leaving the passive engine as an available
gross-output comparator; it is not a symmetric net-to-net comparison. Both quantities are state-level accounting benchmarks assigned per cycle. They do not represent the
microscopic work supplied by the reservoir or pump during a finite contact, which requires an
explicit device-level model.

\begin{figure}[tbp]
    \centering
    \includegraphics[width=0.85\linewidth]{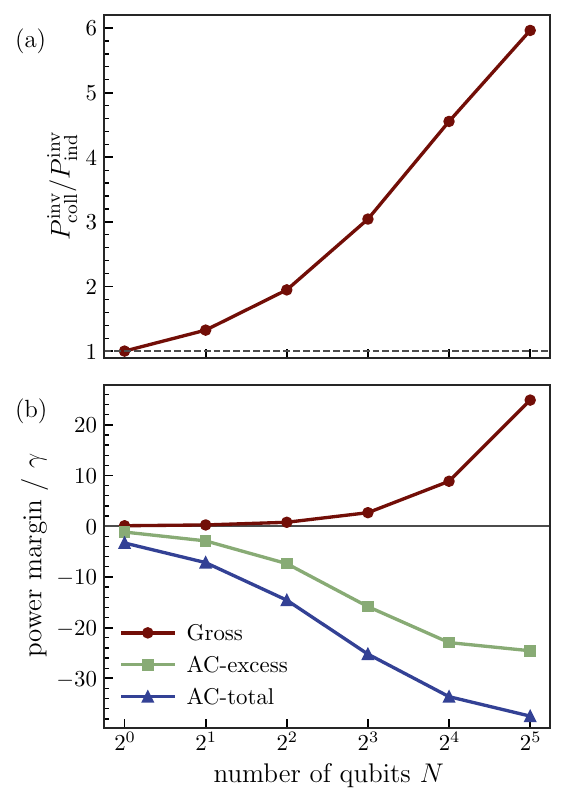}
    \caption{Finite-contact separation of collective kinetics and state-level resource accounting
    under the matched-total-rate convention. (a) Ratio of the inverted collective gross power
    to the gross power of $N$ independent inverted one-qubit engines at the same exposure $u$; the
    horizontal line marks equality. (b) Gross inversion gain and the two actual-corner (AC)
state-level corrected quantities, divided by $\gamma$.
    The excess-corrected curve compares the actual inverted and passive hot corners; the total
    benchmark charges the inverted hot corner relative to the cold-temperature equilibrium
    reference. The controlled reference has $\Omega_c=1$, $\Omega_h=3$, $x_c=1$,
    $x_h=-0.375$, $\gamma=4\times10^{-5}$, and $\tau_h=\tau_c=2500$, with zero-duration
    commuting work strokes.}
    \label{fig:finite-contact-bridge}
\end{figure}

At the controlled reference point the gross inversion gain is positive for every tested
$N=1,2,4,8,16,32$, while both actual-corner corrected quantities in
Eq.~\eqref{eq:finite-contact-corner-margins} remain negative
[Fig.~\ref{fig:finite-contact-bridge}(b)]. Because every rate-like quantity rescales with the common
elementary rate, panel (b) reports the margins in units of $\gamma$. At $N=32$ the rate-scaled gross
gain, actual-corner excess margin, and actual-corner total benchmark are $24.869$, $-24.563$, and
$-37.452$, respectively.

This sign pattern persists across the validation grid used in the reproducibility notebooks:
$N\in\{2,8,16\}$, five inverted hot parameters spanning $|x_h|=0.09$--$1.8$, and five equal
exposures spanning $u_h=u_c=0.005$--$0.8$, for a total of $75$ points. At
$\gamma=4\times10^{-5}$ these correspond to contact times $125$--$20000$. The gross inversion gain
is positive at all $75$ points, while neither actual-corner corrected quantity becomes positive on
that grid. The least-negative values are $-3.14\times10^{-6}$ and $-3.59\times10^{-5}$ in absolute
power units for the excess and total benchmarks, respectively. This is a finite-contact numerical result for the stated parameter domain and does not establish
the corresponding statement for arbitrary contact parameters.

\subsection{Large-\texorpdfstring{$N$}{N} approximations and their domains}
\label{sec:largeN-hierarchy}

Two distinct large-$N$ approximations are used as diagnostics. The first is a Holstein--Primakoff
approximation near a pole. With $k=j-m$ and $k=O(1)$, it replaces the collective ladder by a
bosonic occupation variable and describes a corner distribution concentrated near $m=j$ for
$x>0$. It is a local approximation and is not a description of the finite-time passage from
$m=j$ to $m=-j$ on the inverted branch. For the relaxed-endpoint diagnostic used below, the
Holstein--Primakoff approximation is implemented through the leading north- and south-pole forms
already displayed in Eq.~\eqref{eq:muN-asymptotic}, applied separately to the cold and
inverted-hot endpoints.

The second is a finite-time Kramers--Moyal approximation \cite{vankampen2007}. It expands the discrete birth--death
process in a continuous coordinate and retains the first two jump moments. We introduce the
rescaled coordinate
\begin{equation}
    z\equiv\frac{m}{j},
    \qquad
    \delta\equiv\frac{1}{j}=\frac{2}{N},
    \label{eq:km-coordinate}
\end{equation}
where $z=+1$ and $z=-1$ are the north and south poles of the Dicke ladder, respectively. For a
given contact, we suppress the contact index $\alpha$ on the elementary rates. The exact
nearest-neighbor rates expressed in this coordinate are
\begin{equation}
\begin{aligned}
w_+(z)
&=j(1-z)\big[j(1+z)+1\big]\Gamma_{\downarrow},\\
w_-(z)
&=j(1+z)\big[j(1-z)+1\big]\Gamma_{\uparrow},
\end{aligned}
\label{eq:km-rates}
\end{equation}
where $w_+$ and $w_-$ correspond to steps $z\to z+\delta$ and
$z\to z-\delta$, respectively. The associated Kramers--Moyal drift and diffusion coefficients are
\begin{equation}
\begin{aligned}
A(z)&=\delta\,[w_+(z)-w_-(z)],\\
B(z)&=\delta^2[w_+(z)+w_-(z)].
\end{aligned}
\label{eq:km-coefficients}
\end{equation}

The linear-noise approximation used in the finite-time comparison propagates the mean
$\bar z$ and variance $V=\operatorname{Var}(z)$ according to
\begin{equation}
\begin{aligned}
\frac{d\bar z}{dt}&=A(\bar z),\\
\frac{dV}{dt}&=2A'(\bar z)V+B(\bar z).
\end{aligned}
\label{eq:km-linear-noise}
\end{equation}

The drift therefore estimates the mean polarization, while the diffusion term determines the
leading fluctuation correction. This approximation is distinct from the Holstein--Primakoff
representation and from the exact finite-dimensional propagator. The Holstein--Primakoff
construction and the system-size expansion of a master equation are standard approximations, but
their validity is local in state space and parameter dependent
~\cite{holstein1940,vankampen2007}.

The Holstein--Primakoff error becomes small for the relaxed endpoint test, in which separate local expansions are made about the cold north pole and the inverted hot south pole. The convergence of this endpoint approximation is shown in Fig.~\ref{fig:largeN-validation}: the relative Holstein--Primakoff work error falls below the one-percent level within the tested range of $N$. The Kramers--Moyal comparison instead tests the finite-time evolution of the rescaled first and second moments. In Table~\ref{tab:largeN-errors}, the Holstein--Primakoff column reports the relative error in the relaxed endpoint work, whereas the Kramers--Moyal columns report the maximum absolute errors in the rescaled corner mean and variance over the two stationary-cycle corners. The Kramers--Moyal variance error is not monotonic. The two approximations probe different limits and are kept separate from the exact finite-state inverted-cycle calculation.

\begin{figure}[tbp]
    \centering
    \includegraphics[width=0.9\linewidth]{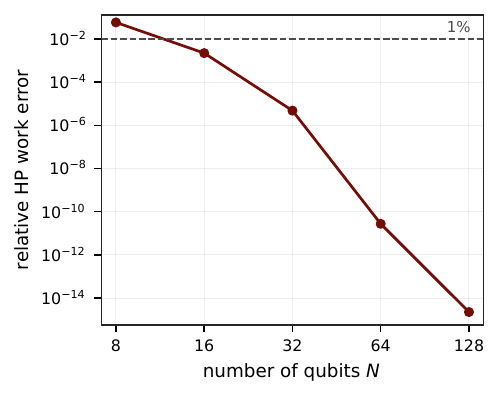}
    \caption{Local Holstein--Primakoff endpoint approximation compared with the exact relaxed
    inverted-cycle work. Separate pole-local expansions are used about the cold north pole and the
    inverted hot south pole for $\Omega_c=1$, $\Omega_h=3$, $x_c=1$, and $x_h=-0.375$.
    The horizontal line marks a one-percent relative error. This comparison tests relaxed endpoints,
    not the finite-time pole-to-pole traversal.}
    \label{fig:largeN-validation}
\end{figure}

\begin{table}[tbp]
    \centering
    \caption{Representative errors of the large-$N$ diagnostics. The Holstein--Primakoff column
    reports the relative error in the relaxed endpoint work obtained from separate pole-local
    approximations at $x_c=1$ and $x_h=-0.375$. The finite-time Kramers--Moyal columns report the
    maximum absolute errors in the rescaled corner mean $z=m/j$ and variance
    $V=\operatorname{Var}(z)$ over the two stationary-cycle corners. The Kramers--Moyal benchmark
    additionally uses $\gamma=0.1$ and $\tau_h=\tau_c=1$ as a reduced-model approximation
    diagnostic; the Davies-controlled production benchmark is the rate-scaled finite-contact
    comparison described in Sec.~\ref{sec:finite-numerics}.}
    \label{tab:largeN-errors}
    \begin{tabular}{@{}r r r r@{}}
    \toprule
    $N$ & HP work (rel.) & KM mean (abs.) & KM variance (abs.) \\
    \midrule
    8  & $5.78\times10^{-2}$ & $1.24\times10^{-1}$ & $3.92\times10^{-2}$ \\
    16 & $2.19\times10^{-3}$ & $7.68\times10^{-2}$ & $4.99\times10^{-2}$ \\
    32 & $4.77\times10^{-6}$ & $2.31\times10^{-2}$ & $1.04\times10^{-2}$ \\
    64 & $2.75\times10^{-11}$ & $9.02\times10^{-3}$ & $1.28\times10^{-2}$ \\
    \bottomrule
    \end{tabular}
\end{table}

\subsection{Symmetry-breaking local channels and Markov validity}
\label{sec:validity}

The ideal collective model assumes that the environment cannot distinguish which constituent
underwent a transition. Local noise breaks this indistinguishability: a jump acting on one
constituent can transfer population to other total-spin sectors, so it is not simply a correction
within the same reduced model unless the leakage is quantified. The relevant experimental question
is therefore how much symmetric-sector population is lost during the observation time. Let
$\Pi_{\rm sym}$ project onto the symmetric sector. For a Dicke state and local jump operators
$L_i$, the instantaneous loss of symmetric-sector weight is
\begin{equation}
 \left.\frac{\dd}{\dd t}\Tr(\Pi_{\rm sym}\rho)\right|_{\rm local}
 =-\sum_i\gamma_i
 \left\|(\id-\Pi_{\rm sym})L_i\ket{j,m}\right\|^2.
 \label{eq:leak-general}
\end{equation}

Because $H=-\Omega J_z$, the local operator $\sigma_i^+$ increases $m$ and therefore lowers the
energy; with this convention,
\begin{equation}
\begin{aligned}
\Pi_{\rm sym}\sigma_i^+\Pi_{\rm sym}&=\frac{J_+}{N},\\
\Gamma_{\rm leak}^{\downarrow}(m)
&=\gamma_{\rm loc}^{\downarrow}\frac{(j-m)(j-m-1)}{N}.
\end{aligned}
\label{eq:leak-down}
\end{equation}

The corresponding local energy-raising channel gives
\begin{equation}
    \Gamma_{\rm leak}^{\uparrow}(m)
    =\gamma_{\rm loc}^{\uparrow}
    \frac{(j+m)(j+m-1)}{N}.
    \label{eq:leak-up}
\end{equation}

For a population distribution $p_m(t)$, a first-order survival estimate is controlled by
\begin{equation}
    \Lambda_{\rm leak}
    =\int_0^{\tau_{\rm obs}}\!\dd t\sum_m p_m(t)
    \left[\Gamma_{\rm leak}^{\downarrow}(m)+
    \Gamma_{\rm leak}^{\uparrow}(m)\right].
    \label{eq:leak-integrated}
\end{equation}

The ideal Dicke-sector prediction remains controlled when $\Lambda_{\rm leak}\ll1$ over the observation
window. The simpler comparison $\gamma_{\rm loc}\ll\gamma$ provides an initial scale estimate,
but the integrated criterion retains the rung and $N$ dependence of the leakage process.

There is a separate large-$N$ restriction on the Markovian Davies approximation. For a Dicke rung
$m$, define the total outgoing rate on contact $\alpha$ by
\begin{equation}
\begin{aligned}
\Gamma_{{\rm out},\alpha}(m)={}&
\Gamma_{\downarrow,\alpha}(j-m)(j+m+1)\\
&+\Gamma_{\uparrow,\alpha}(j+m)(j-m+1),\\
\Gamma_{\max}(N)={}&\max_{\alpha,m}\Gamma_{{\rm out},\alpha}(m).
\end{aligned}
\label{eq:gamma-max-def}
\end{equation}

Under the matched-total-rate convention, the discrete maximum for even $N$ occurs at $m=0$ and is
$\Gamma_{\max}(N)=\gamma N(N+2)/4$. For the finite-contact production benchmark we adopt the
conservative numerical criterion
\begin{equation}
\epsilon_{\rm sec}(N)\equiv
\frac{\Gamma_{\max}(N)}{\Omega_{\min}}\le0.05,
\qquad \Omega_{\min}=\min(\Omega_c,\Omega_h),
\label{eq:secular-control}
\end{equation}
where $0.05$ is a benchmark acceptance threshold rather than a universal Davies constant. Choosing
$\gamma=4\times10^{-5}$ gives $\epsilon_{\rm sec}(64)=0.04224$ and therefore satisfies this
criterion for every finite-contact size used up to $N=64$. The contact durations are increased at
fixed $u=\gamma\tau$, so the population propagators are unchanged while absolute powers rescale
with $\gamma$.

A microscopic Markovian description also requires a short bath correlation time. If $\tau_B$ is
that correlation time, the remaining weak-coupling/separation conditions are
\begin{equation}
    \Gamma_{\max}(N)\tau_B\ll1,
    \qquad
    \Gamma_{\max}(N)\ll\Omega_\alpha.
    \label{eq:markov-range}
\end{equation}

The first inequality cannot be assigned a numerical value without specifying a microscopic bath
correlation time. At fixed microscopic coupling, collective rates grow with $N$, so finite-time
comparisons must state both their rate-matching convention and the range over which the secular
criterion is controlled.

Equations~\eqref{eq:leak-general}--\eqref{eq:leak-integrated} give a rung-resolved sector-survival
diagnostic for the specified weak local channels. Loss of symmetric-sector population during a
contact directly measures the breakdown of the ideal collective description.

Trapped-ion platforms offer independent control over coherent spin operations and engineered
dissipation~\cite{Barreiro2011OpenSystem}, including thermal reservoirs with tunable temperatures
and dissipation rates~\cite{So2026ThermalReservoirs}. Ultracold atoms coupled to optical cavities
provide a complementary route to collective light--matter coupling, inversion, and superradiant
readout~\cite{Bohr2024SuperradiantReadout}, while bounded cold-atom spectra support
negative-temperature states~\cite{Braun2013NegativeTemperature}. In either setting, the quantities
needed for the present tests are operational: the gap protocol fixes $\Omega_c$ and $\Omega_h$,
repeated measurements of the collective polarization reconstruct the corner distributions
$p_1(m)$ and $p_3(m)$, and cycle-resolved records determine the mean work, work variance, and
inter-cycle correlations. Independent characterization of symmetry leakage and the engineered
reservoir rates would then test the two principal validity conditions of the collective model.

\section{Conclusion and Outlook}
\label{sec:conclusion}

This work separates three effects that can easily be conflated in a collective population-inverted
heat engine: the thermodynamic effect of inversion, the kinetic effect of collective coupling, and
the energetic resource already stored in the inverted state. The finite Dicke ladder makes this
separation particularly transparent.

In the complete-reset limit, the main distinction between passive and inverted operation is
geometric. For fixed reservoir conditions away from infinite temperature, the cold and passive-hot
populations remain localized within $O(1)$ rungs of the same end of the Dicke ladder as $N$
increases. Their mean polarizations therefore remain separated by only $O(1)$, and the extracted
work per cycle saturates. Population inversion moves the hot distribution to the opposite end of
the ladder while leaving its width finite. The cold and hot populations are then separated by
$O(N)$ rungs, so the work grows linearly with the number of constituents. The same picture explains
the fluctuation results: matched passive and inverted hot states have the same local width and the
same even hot-state fluctuation contributions, but inversion turns an $O(1)$ separation of
distribution centers into an $O(N)$ one. Consequently, the work reliability remains $O(1)$ on the
passive branch but grows as $O(N)$ on the inverted branch, exceeding the $O(\sqrt N)$
self-averaging obtained from $N$ independent engines.

This extensive complete-reset work should not be confused with a collective acceleration of the
dynamics. It is an endpoint effect produced by the macroscopic separation of the stationary
populations. Collective coupling plays a different role: the Dicke matrix elements reshape the
transition rates and can accelerate relaxation at finite contact times. Near a pole this
enhancement is captured by the local Holstein--Primakoff description, but an inverted cycle must
carry population across an $O(N)$ distance between opposite ends of the ladder and therefore lies
outside any single pole-local approximation. The exact finite-state dynamics nevertheless shows
that, under the matched-total-rate convention used here, collective kinetics can increase the gross
inverted power relative to $N$ independent engines. Finite contacts also preserve part of the
incoming polarization, generating correlations between cycle corners and, over repeated operation,
between successive work outputs. A stationary mean output therefore does not imply statistically
independent cycles: one-cycle reliability and long-time stability can differ whenever the contacts
do not fully erase the boundary-state memory.

The resource analysis gives a complementary interpretation of the enhanced inverted output. For
matched passive and inverted hot states, the additional gross work is exactly
\begin{equation}
    \Delta W_{\rm gross}=\eta W_{\rm erg}.
\end{equation}

Thus the extra Otto work is obtained by converting a fraction $\eta$ of the ergotropy already
carried by the inverted state. If the reversible excess state-formation cost is charged once per
cycle, the incremental balance relative to the matched passive engine becomes
\begin{equation}
    \Delta W_{\rm net}=-(1-\eta)W_{\rm erg}<0.
\end{equation}

The larger gross output of the inverted engine is therefore not a free thermodynamic gain: it
converts a pre-existing active-state resource. The same conclusion holds for the wider class of
passive comparators considered here under the complete-reset, cold-reference reversible
state-level accounting. Population inversion can nevertheless remain operationally valuable when
the active resource is externally supplied, persistent over several cycles, or when power and
reliability rather than net energetic gain are the relevant performance criteria.

The resulting physical hierarchy is therefore simple. Population inversion determines how far
apart the thermodynamic endpoints lie on the finite ladder. Collective coupling determines how
quickly the system can move through that ladder at finite time. Resource accounting determines
whether the resulting increase in gross work represents a genuine net advantage or the conversion
of energy already invested in preparing an active state. The exact finite-state treatment, together
with the Holstein--Primakoff, Kramers--Moyal, symmetry-leakage, and secular-control diagnostics,
shows where each of these statements remains quantitatively controlled as the system size and
contact times are varied.

Several extensions follow naturally from this separation of mechanisms. Noncommuting work strokes
would introduce coherence and make the work statistics genuinely quantum beyond the present
Dicke-diagonal description. Finite reservoir depletion would promote the active resource from a
fixed boundary condition to a dynamical variable, while pump-resolved or autonomous reservoir
models would connect the state-level formation bounds used here to an explicit device-level energy
balance. Repeated-interaction and autonomous active reservoirs
~\cite{strasberg2017repeated,niedenzu2019autonomous,elouard2023autonomous,
shaghaghi2022random}, as well as squeezed-reservoir engines
~\cite{manzano2018squeezed,klaers2017squeezed,xiao2023squeezed}, provide natural settings in which
to test how collective kinetics, work fluctuations, and resource costs change when the reservoir
itself becomes part of the thermodynamic system.

\begin{acknowledgments}
This study was financed in part by the Coordena\c{c}\~ao de Aperfei\c{c}oamento de Pessoal de N\'ivel Superior -- Brasil (CAPES) -- Finance Code 001. We are grateful to Rogério J. de Assis for insightful discussions and suggestions. G.G.D thanks Professor Gao Xianlong for his kind hospitality and support during their stay at the Zhejiang Normal University. Generative AI tools were used for language editing and literature-search assistance; all scientific content was reviewed and validated by the authors, who take full responsibility for the manuscript.
\end{acknowledgments}

\section*{Data and code availability}
The data and numerical code supporting the findings of this study are available from the
corresponding author upon reasonable request.

\appendix

\section{Collective-spin algebra and sector preservation}
\label{app:collective-spin}

The collective-spin algebra used in the main text follows from
$J_\nu=\frac12\sum_{r=1}^N\sigma_\nu^{(r)}$ and
$J_\pm=\sum_r\sigma_\pm^{(r)}$. Operators acting on different constituents commute, while
$[\sigma_z^{(r)},\sigma_\pm^{(r)}]=\pm2\sigma_\pm^{(r)}$. Therefore
$[J_z,J_\pm]=\pm J_\pm$ and $[J_+,J_-]=2J_z$. The Casimir operator $J^2=J_x^2+J_y^2+J_z^2$ commutes with all three collective generators.
Its simultaneous eigenstates are denoted by $|j,m\rangle$, with
$J^2|j,m\rangle=j(j+1)|j,m\rangle$ and $J_z|j,m\rangle=m|j,m\rangle$.

Using $J_\mp J_\pm=J^2-J_z^2\mp J_z$, the norm is
$\|J_\pm|j,m\rangle\|^2=(j\mp m)(j\pm m+1)$, which fixes the phase convention used in Eq.~\eqref{eq:ladder}.

The full tensor-product Hilbert space decomposes into sectors with different $j$. The fully
permutation-symmetric sector has $j=N/2$ and dimension $2j+1=N+1$. If $P_j$ denotes its
projector, then $[P_j,J_z]=[P_j,J_\pm]=0$. Consequently, a Hamiltonian and jump operators
constructed only from these collective operators cannot transfer population between different
$j$ sectors. This is the precise assumption behind the reduction used here. A local
symmetry-breaking jump operator would generally invalidate it.

\section{Birth--death reduction}
\label{app:birth-death}

Let $\rho=\sum_m p(m)|j,m\rangle\langle j,m|$ be diagonal. The jump $J_+$ maps $m$ to
$m+1$ with squared matrix element $(j-m)(j+m+1)$, while $J_-$ maps $m$ to $m-1$ with squared
matrix element $(j+m)(j-m+1)$. Substitution into the dissipator gives
\begin{equation}
\begin{split}
    \dot p(m)={}&w_\alpha^+(m-1)p(m-1)+w_\alpha^-(m+1)p(m+1)\\
    &-[w_\alpha^+(m)+w_\alpha^-(m)]p(m),
\end{split}
\label{eq:app-bd}
\end{equation}
where
\begin{equation}
\begin{aligned}
w_\alpha^+(m)&=\Gamma_{\downarrow,\alpha}(j-m)(j+m+1),\\
w_\alpha^-(m)&=\Gamma_{\uparrow,\alpha}(j+m)(j-m+1).
\end{aligned}
\label{eq:app-rates}
\end{equation}

The superscripts indicate motion toward larger or smaller $m$, respectively, and not the
population bias by themselves.

Writing $\dot{\vec p}=R_\alpha\vec p$, the columns of $R_\alpha$ sum to zero. The matrix
$K_\alpha(\tau_\alpha)=\exp(R_\alpha\tau_\alpha)$ is therefore a stochastic propagator: its entries
are nonnegative and each column sums to one. The stationary cycle is the normalized solution of
$\vec p_1=K_cK_h\vec p_1$. For positive elementary rates, all neighboring rungs communicate and the fixed point is unique.

For an infinitely long contact, detailed balance requires
$w_\alpha^+(m)p(m)=w_\alpha^-(m+1)p(m+1)$. Since the common Dicke factor cancels,
$p(m+1)/p(m)=\Gamma_{\downarrow,\alpha}/\Gamma_{\uparrow,\alpha}=e^{x_\alpha}$, which gives the Gibbs-Dicke distribution in Eq.~\eqref{eq:gibbs-dicke}.

\section{One-cycle trajectory statistics}
\label{app:trajectory}

The positive trajectory description follows from the population process. If $m_1$ is sampled from
the stationary corner distribution, the hot transition to $m_3$ has probability
$[K_h]_{m_3,m_1}$ and the cold transition to $m_5$ has probability
$[K_c]_{m_5,m_3}$. Thus
\begin{equation}
    {\cal P}(m_1,m_3,m_5)
    =p_1(m_1)[K_h]_{m_3,m_1}[K_c]_{m_5,m_3}.
    \label{eq:app-path}
\end{equation}

Normalization follows from the normalization of $p_1$ and the column-stochasticity of both
propagators.

The corner energies are
$E_c(m_1)=-\Omega_cm_1$, $E_h(m_1)=-\Omega_hm_1$,
$E_h(m_3)=-\Omega_hm_3$, $E_c(m_3)=-\Omega_cm_3$, and
$E_c(m_5)=-\Omega_cm_5$. Therefore
\begin{align}
    W_{\rm comp}&=-\Delta\Omega\,m_1,&
    W_{\rm exp}&=+\Delta\Omega\,m_3,\\
    Q_h&=-\Omega_h(m_3-m_1),&
    Q_c&=\Omega_c(m_3-m_5).
\end{align}

The fifth-corner label is needed to assign the cold-contact heat and to test cycle closure, but it
does not enter the work because the cold contact occurs after the expansion. Accordingly,
$W+Q_h+Q_c=E_c(m_5)-E_c(m_1)$ on an individual path, while the stationary average obeys the
cycle first law because the distributions of $m_5$ and $m_1$ coincide.

For any integer $r\geq0$, the raw work moment is
\begin{equation}
    \langle W^r\rangle
    =\sum_{m_1,m_3,m_5}{\cal P}(m_1,m_3,m_5)
    [\Delta\Omega(m_3-m_1)]^r.
    \label{eq:app-work-moment}
\end{equation}

The first two cumulants follow from
$\operatorname{Var}(W)=\langle W^2\rangle-\langle W\rangle^2$. Mixed work--heat moments are
obtained by replacing the final factor with the corresponding product of the expressions above.

The complete-reset limit replaces each conditional transition column by the stationary
distribution. Equation~\eqref{eq:app-path} then factorizes at the hot endpoints, and summing over
$m_5$ recovers Eq.~\eqref{eq:factorization}. For a finite numerical sum, omitting small
probability entries is an approximation. A calculation that uses such a cutoff must repeat the
sum with decreasing cutoffs and report the change in normalized moments.

\section{Free-energy identities and resource references}
\label{app:resource}

For $\rho_x=e^{xJ_z}/Z_j(x)$, one has
$\ln\rho_x=xJ_z-\ln Z_j(x)$ and therefore
$S_N(x)=-\Tr(\rho_x\ln\rho_x)=\ln Z_j(x)-x\mu_N(x)$. Differentiating and using
$\partial_x\mu_N(x)=\langle J_z^2\rangle_x-\langle J_z\rangle_x^2$ gives
$dS_N/dx=-x\,\partial_x\mu_N(x)$.

Let $\rho_{x_0}$ be the Gibbs-Dicke equilibrium state for the Hamiltonian
$H_h=-\Omega_hJ_z$ at reference temperature $T=\Omega_h/x_0$. Then
\begin{align}
    T D(\rho_x\|\rho_{x_0})
    &=T\Tr[\rho_x(\ln\rho_x-\ln\rho_{x_0})]\\
    &=F(x;\Omega_h,T)-F(x_0;\Omega_h,T).
\end{align}

This identity is a lower bound on reversible state-formation work for the selected reference. It
does not specify the work required by a nonequilibrium protocol with finite time, dissipation, or
control constraints.

For $a>0$, the probability vectors satisfy
$p_{-a}(-m)=p_{+a}(m)$. The energies satisfy
$E_h(-m)-E_h(m)=2\Omega_hm$, so the passive rearrangement of $\rho_{-a}$ is $\rho_{+a}$. The
entropy is unchanged by rearrangement and the energy difference is
\begin{equation}
    F(-a;\Omega_h,T)-F(+a;\Omega_h,T)
    =2\Omega_h\mu_N(a)=W_{\rm erg}(a).
    \label{eq:app-ergotropy}
\end{equation}

The product-reference expression and the correlation decomposition are given in
Eqs.~\eqref{eq:tau-product} and \eqref{eq:corr-decomp}. The two references give the same excess
difference between $\rho_{-a}$ and $\rho_{+a}$, but their total formation bounds need not be equal.
A total formation cost is therefore defined only after its reference is specified.

\section{Large-\texorpdfstring{$N$}{N} expansions}
\label{app:largeN}

Starting from $Z_j(x)=\sinh[(N+1)x/2]/\sinh(x/2)$ and using
$\coth y=\operatorname{sgn}(y)+O(e^{-2|y|})$ gives, for fixed nonzero $x$,
\begin{align}
    \mu_N(x)&=\frac N2-\frac{1}{e^x-1}
    +O(Ne^{-Nx}),&&x>0,\\
    \mu_N(x)&=-\frac N2+\frac{1}{e^{-x}-1}
    +O(Ne^{-N|x|}),&&x<0.
\end{align}

The passive and inverted complete-reset work scalings follow by inserting these two expressions
into Eq.~\eqref{eq:work-complete-reset}.

For the north-pole coordinate $k=j-m$, the exact transition rates are
\begin{equation}
\begin{aligned}
w_\alpha(k\to k-1)&=\Gamma_{\downarrow,\alpha}k(N-k+1),\\
w_\alpha(k\to k+1)&=\Gamma_{\uparrow,\alpha}(N-k)(k+1).
\end{aligned}
\end{equation}

When $k=O(1)$ and $N\to\infty$, the leading drift of $k$ is
\begin{equation}
    \frac{d\langle k\rangle}{dt}
    =N\Gamma_{\uparrow,\alpha}
    -N(\Gamma_{\downarrow,\alpha}-\Gamma_{\uparrow,\alpha})\langle k\rangle.
\end{equation}

Under the matched-total-rate convention,
$\Gamma_{\downarrow,\alpha}-\Gamma_{\uparrow,\alpha}
=\gamma\tanh(x_\alpha/2)$ and
$\Gamma_{\uparrow,\alpha}/[\gamma\tanh(x_\alpha/2)]=1/(e^{x_\alpha}-1)$,
which gives Eqs.~\eqref{eq:k-ode} and \eqref{eq:k-sol}.

This expansion requires both a pole-localized distribution and positive $x_\alpha$ for the north
pole to be stable. An inverted contact drives the system toward the opposite pole; the local
equation applies only while the distribution remains localized near the stable pole and does not
describe the $O(N)$ traversal between poles.

\section{Numerical definitions and validation criteria}
\label{app:numerical}

All numerical quantities are evaluated in the dimensionless units $\hbar=k_{\rm B}=1$ used in the
main text. The numerical energy unit is chosen to be unity; changing this reference scale rescales
all energies, work, heat, and power consistently. In the numerical implementation used for the
reported finite-state calculations, the tolerances for probability negativity and stationary-cycle
convergence are $10^{-13}$ in the population representation, and trajectory probabilities below
$10^{-16}$ may be omitted from stored path tables after renormalization. The Kramers--Moyal linear-noise equations are integrated with relative
and absolute ODE tolerances $10^{-10}$ and $10^{-12}$, respectively, and their stationary-cycle
iteration uses a $10^{-12}$ convergence threshold.

For each $N$, the population generator is constructed from Eq.~\eqref{eq:app-rates}. The contact
maps are computed as matrix exponentials. The stationary corner is obtained by iterating
$K_cK_h$ until the $\ell^1$ difference between successive population vectors is below the stated
tolerance. We verify the following conditions:
\begin{enumerate}
    \item every population vector is nonnegative and normalized;
    \item every propagator is nonnegative and column-stochastic;
    \item the stationary-cycle residual
    $\lVert K_cK_h\vec p_1-\vec p_1\rVert_1$ is below tolerance;
    \item the mean first law in Eq.~\eqref{eq:first-law-cycle} closes;
    \item the path first law in Eq.~\eqref{eq:path-first-law} closes for every retained path;
    \item complete-reset results converge to Eqs.~\eqref{eq:means-complete-reset},
    \eqref{eq:var-W}, and \eqref{eq:RW-def};
    \item the relative-entropy cost identity in Eq.~\eqref{eq:free-energy-relative-entropy}
    closes to numerical precision.
\end{enumerate}

Finite-contact trajectory sums are deterministic rather than Monte Carlo estimates. If probabilities
below a numerical threshold are omitted to reduce storage, the calculation is repeated with
successively smaller thresholds. The work and heat moments are reported only when their changes are smaller than the prescribed
numerical tolerance. The finite-state calculation is then
exact up to matrix-exponential, stationary-iteration, floating-point, and any explicitly reported
path-truncation errors.

The large-$N$ comparisons use the exact finite-state result as the reference. Holstein--Primakoff
errors are defined from the relative difference of the relaxed work obtained by combining separate
pole-local approximations at the two relaxed endpoints, while Kramers--Moyal errors are defined
separately for the first and second moments. Agreement of a mean does not validate the variance,
and agreement at the relaxed endpoints does not validate a finite-time traversal between opposite
poles.

The numerical results cover the parameter sets specified in the figures and tables. Asymptotic
statements are restricted to the regimes derived analytically, while finite-contact quantities are
reported as finite-$N$ evaluations of the exact population model.

\bibliography{references}

\end{document}